\newcommand{\bc}{\color{black}}
\newcommand{\bcc}{\color{black}}
\newlength{\dinwidth}
\newlength{\dinmargin}
\DeclareMathOperator{\im}{Im}
\newif\ifshowdetails
\newcounter{todo}
\newcommand{\todobox}[1]{
  \textcolor{blue}{
    \fbox{\parbox{0.9\textwidth}{#1}}%
  }
}
\newcommand{\todonotetag}{TODO\thetodo}
\newcommand{\todonote}[1]{%
\stepcounter{todo}%
{\let\thefootnote\todonotetag
\footnote{\todobox{#1}}%
}
}
\definecolor{detailsgray}{gray}{0.3}
\newcommand{\detail}[1]{%
{\color{detailsgray}$\blacktriangleright${#1}$\blacktriangleleft$}%
}
\newcommand{\detailspar}[1]{
\par \noindent {\color{detailsgray} $\blacktriangleright$ \textit{#1} $\blacktriangleleft$ } \par
}
\newcommand{\detail}[1]{} 
\newcommand{\detailspar}[1]{} 
\newcommand{\lam}{\theta}
\newcommand{\alphan}{\alpha}
\newcommand{\vh}{\mathbf{h}}
\newcommand{\vfe}{\vf_{\mrm{e}}}
\newcommand{\vfb}{\vf_{\mrm{b}}}
\newcommand{\hvk}{\hat{\vk}}
\newcommand{\alt}{\tilde{\al}}
\newcommand{\pol}{\pmb{\epsilon}_{\la}}
\newcommand{\vv}{\pmb{v}}
\newcommand{\mrm}{\mathrm}
\newcommand{\vP}{\mathbf{P}}
\newcommand{\vA}{\mathbf{A}}
\newcommand{\vf}{\mathbf{f}}
\renewcommand{\mathbf}{\boldsymbol}
\newcommand{\mcF}{\mathcal F}
\newcommand{\mcL}{\mathcal L}
\numberwithin{equation}{section}
\newcommand{\mcS}{\mathcal S}
\newcommand{\wt}{\widetilde}
\newcommand{\vx}{\boldsymbol{x}}
\newcommand{\ti}{\tilde}
\newcommand{\vac}{\mrm{vac} }
\newcommand{\Om}{\Omega}
\newcommand{\ka}{\kappa}
\newcommand{\vk}{\boldsymbol{k}}
\newcommand{\be}{\beta}
\newcommand{\pa}{\partial}
\newcommand{\ov}{\overline}
\newcommand{\mfh}{\mathfrak{h}}
\newcommand{\eps}{\varepsilon}
\newcommand{\de}{\delta}
\newcommand{\De}{\Delta}
\newcommand{\pho}{\mathrm{ph}}
\newcommand{\nin}{\noindent}
\newcommand{\si}{\sigma}
\newcommand{\ph}{\phantom}
\newcommand{\h}{\fr{1}{2}}
\newcommand{\hil}{\mathcal{H}}
\newcommand{\om}{\omega}
\newcommand{\mfa}{\mathfrak{A}}
\newcommand{\mco}{\mathcal{O}}
\newcommand{\supp}{\mathrm{supp}}
\newcommand{\fr}[2]{\frac{#1}{#2}}
\newcommand{\al}{\alpha}
\newcommand{\real}{\mathbb{R}}
\newcommand{\complex}{\mathbb{C}}
\newcommand{\la}{\lambda}
\newcommand{\non}{\nonumber}
\newcommand{\lan}{\langle}
\newcommand{\ran}{\rangle}
\newcommand{\vp}{\mathbf{v}_{\mathbf{P}}}
\newcommand{\vptT}{\hat{\mathbf{v}}_{\mathbf{P},T}}
\newcommand{\vpt}{\hat{\mathbf{v}}_{\mathbf{P}}}
\newcommand{\fb}{\mathbf{f}}
\newcommand{\kb}{\mathbf{k}}
\newcommand{\Pb}{\mathbf{P}}
\newcommand{\km}{\lvert \mathbf{k} \rvert}
\newcommand{\vps}{\mathbf{v}_{\mathbf{P},\sigma}}
\def\qed{$\Box$\medskip}
\newtheorem{theoreme}{Theorem } [section]
\newtheorem{proposition}[theoreme]{Proposition}
\newtheorem{lemma}[theoreme]{Lemma}
\newtheorem{definition}[theoreme]{Definition}
\newtheorem{corollary}[theoreme]{Corollary}
\newtheorem{remark}[theoreme]{Remark}
\newtheorem{example}[theoreme]{Example}
\newtheorem{criterion}[theoreme]{Criterion}
\newtheorem{conjecture}{Conjecture}
\newtheorem{assumption}{Assumption}
\newcommand{\tr}{\mrm{tr}}
\newcommand{\bea}{\begin{assumption}}
	\newcommand{\eea}{\end{assumption}}
\newcommand{\beco}{\begin{conjecture} }
	\newcommand{\eeco}{\end{conjecture} }
\newcommand{\beq}{\begin{equation}}
	\newcommand{\eeq}{\end{equation}}
\newcommand{\beqa}{\begin{eqnarray}}
	\newcommand{\eeqa}{\end{eqnarray}}
\newcommand{\ben}{\begin{arabicenumerate}}
	\newcommand{\een}{\end{arabicenumerate}}
\newcommand{\bex}{\begin{example}}
	\newcommand{\eex}{\end{example}}
\newcommand{\ber}{\begin{remark}}
	\newcommand{\eer}{\end{remark}}
\newcommand{\bec}{\begin{corollary}}
	\newcommand{\eec}{\end{corollary}}
\newcommand{\bep}{\begin{proposition}}
	\newcommand{\eep}{\end{proposition}}
\newcommand{\becr}{\begin{criterion}}
	\newcommand{\eecr}{\end{criterion}}
\def\bel{\begin{lemma}}
	\def\eel{\end{lemma}}
\def\bet{\begin{theoreme}}
	\def\eet{\end{theoreme}}
\def\bed{\begin{definition}}
	\def\eed{\end{definition}}
\title{Curing velocity superselection in non-relativistic QED \\ by restriction to a lightcone}
 \author{
{\bf Daniela Cadamuro}\\
Institute for Theoretical Physics, University of Leipzig, \\
E-mail: {\tt daniela.cadamuro@itp.uni-leipzig.de }
 \and
{\bf Wojciech Dybalski} \\
Zentrum Mathematik, Technische Universit\"at M\"unchen,\\
E-mail: {\tt dybalski@ma.tum.de}
}
 \date{}
\begin{document}
\maketitle

\begin{abstract}
It is physically expected that plane-wave configurations of the electron in QED induce disjoint
representations of the algebra of the electromagnetic fields. This phenomenon of velocity superselection,
which is one aspect of the infrared problem, is mathematically well established in non-relativistic
(Pauli-Fierz type) models of QED. We show that velocity superselection can be resolved in
such models  by restricting the electron states to the  subalgebra of the  fields localized in the
future lightcone.  
This actually follows from a more general statement about equivalence of GNS representations for coherent states of the algebra of the {\bc future} lightcone in free electromagnetism.
Our analysis turns out to be meaningful in the non-relativistic setting
and provides evidence in favour of the Buchholz-Roberts approach to infrared problems. 
%{\bc On the other hand,
%we also exhibit a large class of coherent states with the usual infrared singularity  which are not normal
%w.r.t. the vacuum on the algebra of the forward lightcone.   }
\end{abstract}

%%%%%%%%%%%%%%%%%%%%%%%%%%%%
\section{ Introduction} 
%%%%%%%%%%%%%%%%%%%%%%%%%%%
In the framework of local relativistic QFT  D.~Buchholz and J.~E.~Roberts  proposed a novel approach to infrared problems, by focusing attention on measurements performed in some future lightcone  \cite{BR14}.
They defined a family of charged representations, localizable
in certain subsets of the future lightcone, and developed for them a meaningful superselection theory in the spirit of the Doplicher-Haag-Roberts (DHR)
analysis. As the Buchholz-Roberts approach invalidates the standard no-go theorems \cite{Bu86}, also a resolution
of the infraparticle problem, i.e., a demonstration of a sharp mass-shell for the electron, was posed as a question for future research in \cite{BR14}. 
It was later shown by S. Alazzawi and one of the present authors in \cite{AD15} that in the absence of the infraparticle
problem one
can construct Compton scattering states in the Buchholz-Roberts representations of QED. 
However, the question of a sharp mass of the electron was not addressed  in this work and 
it appears to be too specific to tackle it in the axiomatic setting. On the other hand, concrete non-perturbative models
of QED, amenable to a rigorous mathematical treatment, are non-relativistic due to severe ultraviolet problems.
As the algebra of observables localized in a lightcone is a priori not available in such models, they may not appear
suitable to test the Buchholz-Roberts approach. It is the goal of the present paper to show that such {\bcc a} conclusion would
in fact be pre-mature. We consider the well-established property of velocity superselection in {\bc non-relativistic QED},  which 
says that plane-wave configurations of the electron with distinct velocities induce disjoint representations of the
algebra of the electromagnetic fields. We show that a restriction to the subalgebra of the future lightcone is meaningful
in this context and that the phenomenon of velocity superselection disappears after such restriction. This means that 
the plane-wave configurations
become coherent and can, in principle, be superposed into normalisable states of the electron with sharp mass. However,
this latter step is not considered in this work.    

Let us explain in non-technical terms how velocity superselection is defined in models of non-relativistic QED and how we resolve it by restriction to a lightcone.
The Hilbert space of the model is $\hil=L^2(\real^3)\otimes \mcF_{\pho}$, where $L^2(\real^3)$ carries the  degrees of freedom
of a spinless electron and   $\mcF_{\pho}$ is the Fock space of the physical photon states.  The Hamiltonian has the textbook form 
(cf.~\cite{Sp})
\beqa
H:= \frac{1}{2}(-i\nabla_{\pmb{x}} + \ti{\alpha}^{1/2} \pmb{A}(\pmb{x}))^2 + H_{\pho},  \label{intro-Hamiltonian}
\eeqa
where $\ti{\al}>0$ is the coupling constant, $\pmb{x}$ is the position of the electron, $\pmb{A}$ is the electromagnetic potential in the Coulomb gauge with fixed ultraviolet regularization and $H_{\pho}$ is the  Hamiltonian of free photons.  Due to the translation invariance, we can decompose  
$H$ into the fiber Hamiltonians $H_{\pmb{P}}$ at fixed momentum $\vP$:
\begin{equation}
 H =\Pi^* \bigg(\int^{\oplus} H_{\pmb{P}} \; d^3 \pmb{P}\bigg)\Pi, \label{fiber-hamiltonians}
\end{equation} 
where $\Pi$ is a certain unitary map. 
The Hamiltonians  $H_{\pmb{P}}$, given by (\ref{fiber-Hamiltonians}) below, are self-adjoint operators acting on  the so called fiber Fock space which we denote by  $\mcF$. A manifestation of the infraparticle problem in this model is the absence of the ground states of  $H_{\pmb{P}}$, which is known for   small $\ti\al$  and for $\vP\neq 0$ in some ball $\mcS$ around zero \cite{HH08, CFP09}. On the other hand,
for any infrared cut-off $\si>0$ in the interaction, the resulting fiber Hamiltonians $H_{\vP,\si}$ do have (normalised)  ground states $\Psi_{\vP,\si}$
in the same region of parameters $\alt, \vP$.  Although these vectors tend weakly to zero as $\sigma\to 0$ \cite{CFP09}, they define states on a certain $C^*$-algebra $\mfa\subset B(\mcF)$:
\beqa
\om_{\pmb{P}}(A)=\lim_{\sigma\to 0}\lan \Psi_{\pmb{P},\sigma},A  \Psi_{\pmb{P}, \sigma}\ran, \quad A\in\mfa 
\label{original-Psi-0}.
\eeqa
These states can be interpreted as plane-wave configurations of the electron moving with momentum $\vP$.
It is well known that in (\ref{intro-Hamiltonian}), and in similar models of non-relativistic QED,  the GNS representations $\pi_{\pmb{P}}$
of the states $\om_{\pmb{P}}$  are disjoint for different values of $\pmb{P}\in \mcS$  \cite{Fr73, CF07, CFP09, KM14, CD18}. To our knowledge, 
this  mathematical formulation of velocity superselection was first introduced by Fr\"ohlich in \cite{Fr73}. In our recent work \cite{CD18}  
we showed that all the states $\{ \om_{\pmb{P}}\}_{\pmb{P}\in \mcS}$ belong to a suitably defined equivalence class, similar in intention to the charge classes from \cite{BR14}.
We also could resolve the velocity superselection by inserting certain \emph{infravacuum automorphisms} \cite{KPR77} between the `bare electron' and `soft-photon dressing'
constituting the states $\om_{\pmb{P}}$. In the present paper we cure velocity superselection in a more geometric manner, which we now briefly explain.   

It follows from the proof of Proposition~\ref{superselection-proposition} below that the  choice of the algebra $\mfa$ in (\ref{original-Psi-0}) is 
largely arbitrary, as long as it acts irreducibly on $\mcF$ and the states (\ref{original-Psi-0})  are well-defined.  In our paper we choose as $\mfa$
the algebra of observables of the free electromagnetic field. As this theory is local and relativistic, we have a subalgebra $\mfa(V_{+})\subset \mfa$
of the fields localised in the future lightcone. While $\pi_{\vP}$, $\pi_{\vP'}$ are disjoint as representations on the full algebra $\mfa$, we show that they are  unitarily equivalent after restriction to $\mfa(V_+)$. {\bc Actually we even show that $\pi_{\vP}$
are \emph{lightcone normal}, i.e., unitarily equivalent to the vacuum representation $\pi_{\mrm{vac}}$ after restriction to the lightcone. 
}

{\bc
%%%%%%%%%%%%%%%%%%%%%%%%%%%%%%%%%%%%%%%%%
Let us  explain the idea behind the proof  in heuristic terms:  Consider the formal expression
%Using standard results \cite{CF07, CFP09} on the states (\ref{original-Psi-0}), the problem  can  easily be reduced to  showing %that the formal expression
%unitary implementation of automorphisms of $\mfa(V_+)$ given by the adjoint action of the formal expression
%The relevant automorphism can formally be written as $\al_{-i\vpt}=\mrm{Ad} W(-i\vpt)$,
%where the Weyl operator $W(-i\vpt)=e^{a^*(\vpt )-a(\vpt)}$ is ill defined due to an infrared singularity. By formulas~(\ref{vpt-%equation})
%and (\ref{em-potential}), we can write
\beqa
%a^*(\vpt )-a(\vpt)=
W_{\vP}:=\exp\bigg( i\fr{\ti{\al}^{1/2}}{(2\pi)^{3/2}}\int_0^{\infty} dt\, \nabla E_{\vP}\cdot  \mathbf{A}(g)(-t-u, -\nabla E_{\vP}t )\bigg),
%\in \mfa(V_+)', 
\label{first-time-integral}
\eeqa
where $E_{\vP}$ denotes the infimum of the spectrum of $H_{\vP}$, $g\in C_0^{\infty}(\real^3)$ is a smearing function, and $u>0$ is sufficiently large.
 The problem of convergence of the $t$-integral above will be left aside in this introductory discussion.  
 Up to a phase and the shift $u$, the expression $W_{\vP}$ is the incoming Dyson wave operator for the electromagnetic field interacting  with an external current. As expected, this current corresponds to an electron moving with velocity $\nabla E_{\vP}$,
 whose charge distribution is described by $g$, (cf. \cite[Section 6]{DH19}).  Using standard results from \cite{CF07, CFP09} on the states (\ref{original-Psi-0}), it is easily seen that their GNS representations  $\pi_{\vP}$ are unitarily equivalent to 
 $\pi_{\mrm{vac}}(W_{\vP}\,\cdot\, W_{\vP}^*)$.
  We  show that all $\pi_{\vP}$, $\vP\in \mcS$, are unitarily equivalent to 
 $\pi_{\mrm{vac}}$  by verifying that $W_{\vP}$ commute with $\mfa(V_+)$. 
%the problem  can  easily be reduced to  showing that the formal expression
%is in the commutant of $\mfa(V_+) $. Here 
If $\mathbf{A}$ was a local field, the  expression $W_{\vP}$ would
clearly be localized in the backward lightcone. Since this is not the case, we need one more step: using $\mathbf{E}=-\pa_t \mathbf{A}$
we express $\mathbf{A}$ as an integral of the free electric field $\mathbf{E}$, i.e.,
\beqa
W_{\vP} =\exp\bigg(-i\fr{\ti{\al}^{1/2}}{(2\pi)^{3/2}}\int_0^{\infty} dt\,\int_t^{\infty}d\tau\,  \nabla E_{\vP}\cdot  \mathbf{E}(g)(-\tau-u, -\nabla E_{\vP}t )\bigg), 
\label{double-time-integral}
\eeqa
which is manifestly localised in the backward lightcone $V_-$. Now by the Huyghens principle  $\mfa(V_-)\subset \mfa(V_+)'$
so we obtain lightcone normality of $\pi_{\vP}$.
%Thus for any $A\in \mfa(V_+)$ we expect $\al_{\vpt}(A)=\mrm{Ad} W(-i\vpt)(A)=A$
%by the Huyghens principle (\ref{Huyghens}).
%%%%%%%%%%%%%%%%%%%%%%%%%%%%%%%%%%%%%%%%%%

The above intuitive arguments can be made rigorous by a careful control of the $t$-integrals in (\ref{double-time-integral}).
As this discussion is quite technical, we postpone it to Appendix~\ref{appendix-A}. In Section~\ref{sec:norma} we give
a less technical but also less insightful complex function argument, relying heavily on the fact that $\pi_{\mrm{vac}}(W_{\vP}\,\cdot\, W_{\vP}^*)$  is a coherent state. Although coherent states are well studied \cite{Ro70}, their behaviour
under restrictions to lightcones seems to be a `forgotten chapter', perhaps familiar to some experts but never  published.
We believe that there is a good reason to revisit this subject. Namely,  the general discussion of Buchholz and Roberts \cite{BR14} may suggest that all the coherent states from \cite{Ro70} with the usual infrared singularity are lightcone normal. 
We show in  Theorem~\ref{general-main-theorem} below that this is not the case. Specifically, consider functions of the form
\beqa
\vv(\vk) :=\fr{i}{|\vk|^{3/2}}  F(|\vk|)\vh(\hvk),
\eeqa
where $F: \real_+\to \complex$ is H\"older continuous at zero, $|\vk|^{-1/2}F$ is square-integrable outside  zero  and 
$\vh\in L^2(S^2;\real^3)$  is non-zero and transverse. Then coherent state representations given on Weyl operators
 by $\pi_{\vv}(W(\vf)):=e^{-2i\mrm{Im}\lan \vv, \vf\ran } \pi_{\mrm{vac}}(W(\vf))$ are lightcone normal
\emph{if and only if}
\beqa
\mrm{Im} F(0)=0. \label{reality-condition}
\eeqa
As coherent states $\pi_{\mrm{vac}}(W_{\vP}\,\cdot\, W_{\vP}^*)$ discussed above satisfy this condition,
this general result gives lightcone normality of the states $\pi_{\vP}$ and the absence of velocity superselection
on the lightcone algebra. In the same time,  Theorem~\ref{general-main-theorem} raises the question if 
coherent states with $\mrm{Im} F(0)\neq 0$ are relevant for infrared problems.  We  remark here that coherent states 
considered in \cite{DW19}, related to certain
gauge transformations in external current QED,  violate condition~(\ref{reality-condition}). However, as $\vh$ of (\ref{reality-condition}) is a distribution in this case,  Theorem~\ref{general-main-theorem} does not apply directly.
A further analysis of this issue, which is left for future research, may help to understand if different gauges can be
distinguished inside the future lightcone. This in turn may shed  light on the limitations of the Buchholz-Roberts approach.

%Such a result would demonstrate one aspect of the infrared problem which 
%cannot be cured by restriction to a lightcone.
%limitation to the Buchholz-Roberts approach to infrared problems. 
%can be produced in collisions or are otherwise relevant for
%the infrared problem. 
%where $W(\vf)\in \mfa$ are the Weyl operators, are locally normal if and only if $\mrm{Im F}=0$.
% restrictions to  lightcones should cure the infrared problems   
%one could expect 

}

%We will actually study a larger class of coherent states (see Sec.~\ref{sec:norma}) on the algebra of the free electromagnetic %field. Depending on the properties of their defining functionals, these states are or are not unitarily equivalent to the vacuum $%\omega_\vac$ when restricted to the algebra of the forward lightcone. This general result is then applied to the states $%\omega_{\pmb{P}}$ in the Pauli-Fierz model in Secs.~\ref{sec:pauli} and~\ref{sec:curing}.
  
%An alternative, more direct proof for the case of $\om_{\pmb{P}}$ is reported in Appendix~\ref{appendix-A}: There we argue  %by approximating the infrared divergent part of $\pi_{\vP}$
%by inner automorphisms localised in the algebra of the backward lightcone $\mfa(V_-)$. By exploiting the Huyghens principle 
%$\mfa(V_-)\subset \mfa(V_+)'$ we show that this divergent part acts as the identity on observables localized in the future %lightcone. This idea of the
%proof is explained in more detail before Lemma~\ref{lemma2}.

\vspace{0.2cm}

\nin\textbf{Acknowledgements:} We would like to thank Detlev Buchholz and Henning Bostelmann for helpful discussions. We also thank Pawe\l{}  Duch for a useful hint in the second part of the proof of Lemma~\ref{lemma-symplectic-equivalence}.
{\bc Furthermore, we thank an anonymous referee for pointing out to us the computation~(\ref{long-computation})}.
This work was  supported by the DFG within the Emmy Noether grants DY107/2-1, DY107/2-2 and CA 1850/1-1.

%%%%%%%%%%%%%%%%%%%%%%%%%%%%%%%
\section{Free electromagnetic field}
%%%%%%%%%%%%%%%%%%%%%%%%%%%%%%

We set $L^2(\real^3;\complex^3):=L^2(\real^3)\otimes \complex^3$   and denote the scalar product by $\lan \,\cdot\,, \, \cdot \, \ran$. The single-photon Hilbert space $\mfh$ is spanned by the transverse functions
\beqa
L^2_{\tr}(\real^3;\complex^3):=\{ \vf\in L^2(\real^3;\complex^3)\,|\,  \pmb{k} \cdot \pmb{f}(\pmb{k}) =0 \;\; \text{a.e.}\}
\eeqa 
and we denote by $P_{\tr}: L^2(\real^3;\complex^3)\to L^2(\real^3;\complex^3)$  the orthogonal projection on $L^2_{\tr}(\real^3;\complex^3)$.
We set $\hat{\pmb{k}}:=\pmb{k}/|\pmb{k}|$,  write $S^2$  for the unit sphere in $\real^3$
and introduce the polarisation vectors $S^2 \ni \hat{\vk}\mapsto  \pmb{\epsilon}_{\pm}(\hat{\vk})\in S^2$, given by, e.g., \cite{LL04}
\begin{eqnarray}
\pmb{\epsilon}_+ ( \hat{\pmb{k}})= \frac{(\hat{k}_2, -\hat{k}_1,0)}{\sqrt{\hat{k}_1^2 + \hat{k}_2^2}}, \quad
\pmb{\epsilon}_-( \hat{\pmb{k}})=   \hat{\pmb{k}}   \times \pmb{\epsilon}_+(  \hat{\pmb{k}}),
\end{eqnarray}
 which satisfy $\pmb{k} \cdot \pmb{\epsilon}_\pm(\hat{\vk}) =0$ and ${\pmb{\epsilon}}_+(\hat{\vk}) \cdot \pmb{\epsilon}_-(\hat{\vk}) =0$ for $\hat{\vk}=(\hat{k}_1,\hat{k}_2,\hat{k}_3)\in S^2$. With the help of these vectors we can write
\beqa
(P_{\tr}\vf)(\vk)=\sum_{\la=\pm }  \big(\vf(\vk)\cdot  \pmb{\epsilon}_\la (\hat{\pmb{k}} ) \big) \pmb{\epsilon}_\la ( \hat{\pmb{k}}) \label{transverse-projection}
\eeqa
and note that the right hand side  of the latter equality is actually meaningful for any function $\vf: \real^3\to \complex^3$. For a given choice of the polarisation
vectors we can identify $L^2_{\tr}(\real^3; \complex^3)$ with $L^2(\real^3;\complex^2)$ via 
\beqa
L^2_{\tr}(\real^3; \complex^3)\ni \fb \mapsto (f_{+}, f_-)\in L^2(\real^3;\complex^2), \quad  f_{\pm}:=\pmb{\epsilon}_{\pm}   \cdot \fb.
\eeqa
Next, we denote
by $\mcF$ the symmetric Fock space over $ \mfh:=L^2_{\tr}(\real^3; \complex^3)\simeq L^2(\real^3; \complex^2)$:
\begin{equation}\label{Fock-space}
{\mathcal{F}} := \oplus_{n=0}^\infty {\mathcal{F}}^{(n)}, \quad {\mathcal{F}}^{(n)} :=  \operatorname{Sym}_n (  \mfh  \phantom{}^{\otimes n}),\quad 
{\mathcal{F}}^{(0)} = \mathbb{C}\Omega.
\end{equation}
The dense domain of finite particle vectors will be denoted by $\mcF_0$ and $\mathcal{D}_{S}\subset \mcF_0$ will denote the subspace of
finite particle vectors with Schwartz-class wave functions.

Let $a^{(*)}(\,\cdot\,)$ be the creation and annihilation operators on this Fock space and $a^{(*)}_{\la}(\vk)$ the improper creation and annihilation
operators on $\mcF$ such that $[a_{\la}(\vk), a^*_{\la'}(\vk')] =\de_{\la\la'}\de(\vk-\vk')$  and all other commutators vanish. These operators are related by
$a^*(\vf)=\sum_{\la=\pm} \int d^3\pmb{k}\, a_{\la}^*(\vk)\, (\pol(\hat\vk)\cdot \vf(\vk))$,  for $\vf\in \mfh$.  

Now we define the electromagnetic potential in the Coulomb gauge as an operator valued distribution on Fock space\footnote{We skip the usual normalisation constant 
$\fr{1}{(2\pi)^{3/2}} \fr{1}{\sqrt{2}} $ for consistency with \cite{CFP09}.}
\beqa
\vA(t,\vx):=\sum_{\la=\pm }\int \fr{d^3\pmb{k}}{\sqrt{|\vk|} }\, \pmb{\epsilon}_{\la}(\hat\vk)\big( e^{i|\vk|t-i\vk\cdot \vx}a^*_{\la}(\vk)+   e^{-i|\vk|t+i\vk\cdot \vx}a_{\la}(\vk)  \big). \label{em-potential}
\eeqa
More precisely, for any $\fb\in D(\real^4;\real^3)$,  (the space of smooth, compactly supported functions from $\real^4$ to $\real^3$),  the expression
\beqa
\vA(\fb):=\int dtd^3\vx\, \vA(t,\vx)\!\cdot\! \fb(t,\vx)
\eeqa
defines  an essentially self-adjoint operator on $\mcF_0$, whose self-adjoint extension will be denoted by the same symbol (cf.~\cite[Section X.7]{RS2}).  The same applies to
the electromagnetic fields, which are defined as distributions by
\beqa
\mathbf{E}(t,\vx)=-\pa_{t} \vA(t,\vx), \quad \mathbf{B}(t, \vx)=\mrm{rot}\, \vA(t,\vx).
\eeqa 

In contrast to the electromagnetic potential above, the electromagnetic fields are Wightman fields. They give rise to a Haag-Kastler net of local $C^*$-algebras
which is constructed in a standard manner: For any double cone\footnote{A double cone is a spacetime translate of a set 
$\mco_r:=\{\,(t,\pmb{x})\in \real^4\,|\, |t|+|\pmb{x}|<r\}$, $r>0$. We also say that $O_r:=\{\, \pmb{x}\in \real^3 \,|\,|\pmb{x}|<r\}$ is the base of $\mco_r$.} $\mco\subset \real^4$ we define 
the local algebra $\mfa(\mco)$ as the $C^*$-algebra generated by exponentials of the smeared fields:
\beqa
\mfa(\mco):=C^*\{  e^{i (\mathbf{E}(\vfe)+\mathbf{B}(\vfb)) } \,|\,  \supp\, \vfe, \supp\,\vfb\subset \mco\,\}. \label{local-algebra}
\eeqa
The algebras associated with any (possibly unbounded) open regions $\mathcal{U}$ are obtained by the $C^*$-inductive limit, i.e.,
\beqa
\mfa(\mathcal{U}):=\ov{\bigcup_{\mco\subset \mathcal{U}}  \mfa(\mco)}^{\|\,\cdot\, \|}.
\eeqa
This gives, in particular, the quasi-local algebra $\mfa:=\mfa(\real^4)$ and the algebras $\mfa(V_{\pm})$ of the
future (+) and backward (-) open lightcone with a tip at zero.

The net of algebras $\mco\mapsto \mfa(\mco)$ is local, i.e., $\mfa(\mco_1)\subset \mfa(\mco_2)'$, where   $\mco_1$ and $\mco_2$
are spacelike-separated and the prime denotes the commutant in $B(\mcF)$. Even more importantly, the
Huyghens principle holds, that is,
\beqa
\mfa(V_-)\subset \mfa(V_+)'. \label{Huyghens}
\eeqa
We will also use the translation covariance property, which gives
\beqa
e^{iH_{\mrm{ph}}t-i\vP_{\mrm{ph}}\cdot \vx    } \mfa(\mco)e^{-iH_{\mrm{ph}}t+i\vP_{\mrm{ph}}\cdot \vx }=\mfa(\mco+(t,\vx)),
\eeqa
where the energy-momentum operators
\beqa
H_{\mrm{ph}}:=\sum_{ \la=\pm} \int d^3\pmb{k}\, |\vk|\, a^*_{\la}(\vk) a_{\la}(\vk), \quad \vP_{\mrm{ph}}:=\sum_{ \la=\pm} \int d^3\pmb{k}\, \vk\, a^*_{\la}(\vk) a_{\la}(\vk)
\eeqa
are essentially self-adjoint on  $\mathcal{D}_{S}$ and their self-adjoint extensions are denoted by the same symbol.

It will be convenient to express the algebras above as CCR algebras in the Fock representation. For this purpose, for any $\vfe, \vfb\in D(\real^4;\real^3)$, we write
\beqa
\vf(\vk):= -i(2\pi)^2  \bigg( |\vk|^{1/2} P_{\mrm{tr}} \ti\vfe(|\vk|,\vk) + |\vk|^{-1/2}(\vk\times \ti\vfb(|\vk|,\vk) )\bigg), \label{fourier-representation}
\eeqa
where tilde denotes the Fourier transform\footnote{We use the conventions for the Fourier transform from \cite{RS2}, i.e., $\tilde{f}(k^0, \pmb{k})
=\fr{1}{(2\pi)^2} \int dt d^3 \pmb{x}\, e^{ik^0t-i\pmb{k}\cdot \pmb{x} } f(t,\pmb{x}).$ }. We define the real-linear vector spaces
\beqa
\mcL(\mco):=\{ \vf \,|\, \supp\, \vfe, \supp\, \vfb\subset \mco\,\}, \quad \mcL(\mathcal{U}):=\bigcup_{\mco\subset \mathcal{U}} \mcL(\mco) \label{symplectic-spaces}
\eeqa
and equip them with the symplectic form $\mathbf{\si}(\vf_1, \vf_2)=\mrm{Im}\lan \vf_1, \vf_2\ran$. Then $W(\vf):=e^{i(a^*(\vf  )+a(\vf))}$
satisfy the Weyl relations
\beqa
W(\vf_1) W(\vf_2)=e^{-i \mathbf{\si}(\vf_1, \vf_2) } W(\vf_1+\vf_2), \quad W(\vf)^*=W(-\vf).
\eeqa
 We note  that 
$\mfa(\mco)=\mrm{CCR}(\mcL(\mco))$, $\mfa=\mrm{CCR}(\mcL)$ and $\mfa(V_{\pm})= \mrm{CCR}(\mcL(V_{\pm}))$, where
 $\mrm{CCR(\ti\mcL)}$ denotes the $C^*$-algebra generated by $W(\vf)$, $\vf\in \ti\mcL$.
Since $\mcL:=\mcL(\real^4)$ is dense in $L^2_{\mrm{tr}}(\real^3;\complex^3)$,  the quasi-local algebra $\mfa$ acts irreducibly on $\mcF$.  {\bc The defining  representation of $\mfa$ will be denoted $\pi_{\vac}$. It is the GNS representation of
the vacuum state $\om_{\mrm{vac}}(\, \cdot\,):=\lan \Om, \, \cdot \,\, \Om\ran$. We say that a given representation $\pi$ of
$\mfa$ is \emph{lightcone normal} if
\beqa
\pi  \restriction \mfa(V_+)\simeq  \pi_\vac \restriction \mfa(V_+),
\eeqa
where $\simeq$ denotes the unitary equivalence. Lightcone normality of states is defined w.r.t. their GNS representations.}

%%%%%%%%%%%%%%%%%%%%%%%%%%%%%%%
\section{Lightcone normality of coherent states}\label{sec:norma}
%%%%%%%%%%%%%%%%%%%%%%%%%%%%%%%

In this section we investigate {\bc lightcone normality} of coherent states on the algebra of the free electromagnetic field. 
{\bc Although coherent states have been well studied, e.g. by Roepstorff \cite{Ro70}, we are not aware of any treatment of this particular aspect in the literature. Theorem~\ref{general-main-theorem} below gives an exact characterization of 
lightcone normality for coherent states with the usual infrared singularity. Our analysis reveals that the lightcone normality for
such states  is not automatic but requires an additional `reality assumption' on the defining functional.  }

%We give conditions under which their GNS representations are unitarily equivalent to the vacuum representation when %restricted to the algebra $\mfa(V_+)$,  as well as {\bc discuss} counterexamples in this respect.

Consider functions of the form
\beqa\label{v}
\vv(\vk) :=\fr{i}{|\vk|^{3/2}}  F(|\vk|)\vh(\hvk). \label{general-coherent-functions}
\eeqa
Here $\vh\in L^2_{\mrm{tr}}(S^2, \real^3)$, {\bc $\vh \neq 0$},  and $F: \real_+\to \complex$ is a measurable function which satisfies
\begin{enumerate}
\item[(a)] H\"older continuity at zero, i.e., $|F(0)-F(|\vk|)| \leq c |\vk|^{\eps}$ for some $\eps>0$ and all $|\vk|\leq 1$,

\item[(b)] $\int_{\si}^{\infty} d|\vk| |\vk|^2 \big|\fr{1}{|\vk|^{3/2}}  F(|\vk|)\big|^2 <\infty$ {\bc \textrm{ for any } $\si>0$}.

\end{enumerate}
As $\vv$ are not square-integrable, it is convenient to introduce an approximating sequence of 
$L^2$-functions:
\beqa
\vv_{\si}(\vk)=\fr{i}{|\vk|^{3/2}}\chi_{[\si, \infty)}(|\vk|) F(|\vk|)\vh(\hvk),
\eeqa  
where $\chi_{\De}$ is the characteristic function of a set $\De$. Now we consider coherent automorphisms of $\mfa$ defined by
\beqa\label{exp}
\alphan_{\vv}(W(\fb)) := \lim_{\si \to 0}W(\vv_{\si})W(\fb) W(\vv_{\si})^*=  e^{-2i \im  \langle \vv, \fb \rangle} W(\fb).
\eeqa
The main result of this section is  now a characterization of lightcone normality of the coherent states above under mild regularity conditions.
%%%%%%%%%%%%%%%%%%%%%%%%%%%%
\bet\label{general-main-theorem} For $\vv$ as in (\ref{general-coherent-functions}), satisfying properties (a), (b), 
we have
\beqa
 \pi_\vac \circ \alphan_{\vv}  \restriction \mfa(V_+)\simeq  \pi_\vac \restriction \mfa(V_+),
\eeqa
if and only if $\mrm{Im} F(0)= 0$. 
\eet
%%%%%%%%%%%%%%%%%%%%%%%%%%%%%%%%%%
%The result follows from property (\ref{shift-F-G}) and %Lemma~\ref{Strocchi-Lemma}. \qed
%%%%%%%%%%%%%%%%%%%%%%%%%%%%%%%%%%
\begin{proof} {\bc First  suppose  that $\mrm{Im} F(0)=0$}.  
%We decompose $F$ into its real and imaginary parts $F=F_1+iF_2$ and note that the real-valued functions $F_1, F_2$ inherit
%properties (a), (b). 
{\bc Then we can find a real-valued function  $G\in C_0^{\infty}(\real)$, supported in the interior of the negative real axis, s.t.
$\ti{G}(0)=F(0)$
and define an auxiliary  function}
\beqa
\hat\vv(\vk)=\fr{i}{|\vk|^{3/2}}  \ti{G}(|\vk|)\boldsymbol{h}(\hvk). \label{hat-v}
\eeqa
It follows from properties (a), (b) and $F(0)=\ti G(0)$ that $\vv- \hat\vv\in L^2_{\mrm{tr}}(\real^3; \complex^3)$.
Thus, by standard arguments (e.g. Lemma 1 of \cite{Ro70}), {\bc $\alphan_{\vv} =\mrm{Ad}U\circ  \alphan_{\hat{\vv}}$}
%\beqa
% \label{shift-F-G}
%\eeqa
for some unitary $U$. Therefore, to conclude the proof of {\bc the if-part of} Theorem~\ref{general-main-theorem}, it suffices to show that 
$\alphan_{\hat{\vv}}\restriction \mfa(V_+)=\mrm{id} \restriction \mfa(V_+)$, {\bc where $\mrm{id}$ is the identity mapping}. 
This is a consequence {\bc of (\ref{exp}) and  Lemma~\ref{Strocchi-Lemma} below.}

{\bc Now suppose that $\mrm{Im} F(0)\neq 0$}. {\bc Then $F=\mrm{Re}F+i \mrm{Im}F$ gives the corresponding 
decomposition $\vv=\vv_0-\check\vv$, where $\vv_0$ is as in the first part of the proof and}
% Let us consider a state  of the form
\beqa\label{check-v}
\check\vv(\vk):=\fr{1}{|\vk|^{3/2}} ({\bc\mrm{Im}\,F})(|\vk|)\boldsymbol{h}(\hvk).
\eeqa
%%%%%%%%%%%%%%%%%%%%%%%%%%%%%%%%%%%%%%%%%%%%%%%%
{\bc Let us show  that $\alphan_{\check\vv}$ is not lightcone normal by the method of central sequences}. Suppose, {\bc by contradiction}, that $\pi_{\vac}\circ \alphan_{\check\vv} \restriction \mfa(V_+)\simeq \pi_{\vac} \restriction \mfa(V_+)$ for some unitary $U$ on  $\mcF$. This implies
\begin{equation}\label{inner}
\omega_\vac\circ \alphan_{\check\vv} \restriction \mfa(V_+) = \omega_\vac \circ \mrm{Ad}U \restriction \mfa(V_+).
\end{equation}
Since the representation $\pi_{\vac}$ of $ \mfa$ is irreducible, then by \cite[Theorem~10.2.1]{KR} there exists $\tilde U \in \mfa$ such that $\tilde U^\ast \Omega = U^\ast \Omega$. Thus we can write, for all $A\in  \mfa(V_+) $,
\begin{equation}\label{comm}
\omega_\vac( \alphan_{\check\vv} (  A)) = \omega_{\vac}( \tilde U  A \tilde U^\ast) = \omega_\vac( \tilde U [ A, \tilde U^\ast]) +  \omega_\vac(  A).
\end{equation}
Now let $\vf\in \mcL(V_+)$. We introduce the sequence $\vf_\lam (\vk):=\lam^{-3/2}\vf(\vk/\lam) $, $\lam>0$, 
and set $A=W(\vf_\lam)$ in \eqref{comm}.  By Lemma~\ref{first-lemma}, we have
\begin{equation}
\lim_{\lam \to 0}\big( e^{-2i \mrm{Im} \lan \check{\vv} , \vf_\lam   \ran} - 1 \big) \omega_\vac(  W(\vf_\lam)) =0.
\end{equation}
Since $\omega_\vac( W(\vf_\lam))$ is constant in $\lam$ {\bc and non-zero}, and since we could always change $\vf_\lam$ by a multiplicative constant, we arrive at a contradiction, considering Lemma~\ref{second-lemma} below. 

{\bc Finally, recalling  that $\alphan_{\check\vv} =\alphan_{\vv_0} \circ \alphan_{-\vv}$ and that $\alphan_{\vv_0}$ is
lightcone normal by the first part of the proof, we conclude that $\alphan_{\vv}$ cannot be lightcone normal if  $\mrm{Im} F(0)\neq 0$.}
\end{proof}
%%%%%%%%%%%%%%%%%%%%%%%%%%%%%%%%%
\bel\label{Strocchi-Lemma}  For $\hat\vv$ as in (\ref{hat-v}),  {\bc with $G\in C_0^{\infty}(\real)$  real-valued and supported in the interior of the negative real axis}, and any $\vf\in \mcL(V_+)$
%Let $u$ be sufficiently large. Then, for any $\vf\in \mcL(V_+)$
\beqa\label{scalarprod}
 \mrm{Im} \lan \hat{\vv} , \vf   \ran=0.  %\mrm{Im} \lan i\vv_u, \vf\ran=0, \quad \mrm{Im}\lan \vv_u, \vf\ran=0.
\eeqa 
\eel
%%%%%%%%%%%%%%%%%%%%%%%%%%%%%%%%%
\proof %We set $\hat\vv_j(\vk)=\fr{1}{|\vk|^{3/2}}  \ti{G}_j(|\vk|)h(\hvk)$, $j=1,2$, so that $\vv=\vv_1+i\vv_2$. 
We  write
$\vf(\vk):= -i(2\pi)^2  \bigg( |\vk|^{1/2} P_{\mrm{tr}} \ti\vfe(|\vk|,\vk) + |\vk|^{-1/2}(\vk\times \ti\vfb(|\vk|,\vk) )\bigg)=:\vf^{\mrm{e}}+\vf^{\mrm{b}}$
and consider the resulting two contributions:
%and first consider the electric contribution to $\hat\vv_2$:
\beqa
& &\mrm{Im} \lan \hat\vv, \vf^{\mrm{e}}\ran
%=(2\pi)^2 \mrm{Im} \int d^3\vk\,  \fr{(-1)(-i)}{ |\vk|^2 } \ti{G}(-|\vk|) \vh(\hvk)    i|\vk| \ti\vfe(|\vk|,\vk)\non\\
%& &\ph{444444444}= -(2\pi)^2 \mrm{Im} \int d^3\vk\,  \fr{1}{ |\vk|^2 } i\ti{G}_2(-|\vk|) \vh(\hvk)   \wt{\pa_t\vfe}(|\vk|,\vk)\non\\
%& &\ph{444444444}= -(2\pi)^2  \int d\Om(\hvk) \vh(\hvk) \mrm{Im} \int_0^{\infty} d\rho \, \rho  \ti{G}(-\rho)   \ti{\vfe}(\rho,\hvk %\rho) \non\\
%& & \ph{444444444}= -\fr{(2\pi)^2}{2i}  \int d\Om(\hvk) \vh(\hvk)\big(\int_0^{\infty} d\rho \, \rho \ti{G}(-\rho)   \ti{\vfe}(\rho,\hvk %\rho)\non\\
%& & \ph{44444444444444444444444444444444}- \int_0^{\infty} d\rho \,   \rho\ti{G}(\rho )   \ti{\vfe}(-\rho, -\hvk \rho)\big)\non\\
= -\fr{(2\pi)^2}{2i}  \int d\Om(\hvk) \vh(\hvk)\big(\int_0^{\infty} d\rho \, \rho  \ti{G}(-\rho)   \ti{\vfe}(\rho,\hvk \rho)\non\\
& & \ph{44444444444444444444444444444444}+ \int_{-\infty}^{0} d\rho \,   \rho\ti{G}(-\rho)   \ti{\vfe}(\rho, \hvk \rho)\big)\non\\
& &\ph{444444444}=-\fr{(2\pi)^2}{2i}  \int d\Om(\hvk) \vh(\hvk)  \int_{\real} d\rho \, \rho  \ti{G}(-\rho)   \ti{\vfe}(\rho,\hvk \rho).
\label{long-computation}
\eeqa
Now we want to close the contour in {\bc the} upper complex half-plane. We write $z=\rho+i\eta$ and note 
\beqa
\ti{\vfe}(z,\hvk z) %\!\!\!&=&\!\!\! \fr{1}{(2\pi)^2} \int e^{i (\rho+i\eta)t-i (\rho+i\eta) \hvk\cdot \vx   } \vfe(t,\vx) dt d^3\vx \non\\
\!\!\!&=&\!\!\!  \fr{1}{(2\pi)^2} \int e^{ (i\rho-\eta) (t- \hvk\cdot \vx)   } \vfe(t,\vx) dt d^3\vx.
\eeqa
Since $\vf$ is supported in the future lightcone, we have $t- \hvk\cdot \vx\geq \de>0$, uniformly in $\hvk$. Hence
\beqa
|\ti{\vfe}(z,\hvk z)|\leq C e^{-\eta \de}.
\eeqa 
Moreover, considering that $G$ is supported in $(-\infty, -R]$, for some $R>0$, we have
\beqa
|\ti{G}(-z ) |\leq Ce^{-R\eta}.
\eeqa
Thus we can close the contour and conclude that $\mrm{Im} \lan \hat{\vv}, \vf^{\mrm{e}}\ran=0$. To show that 
$\mrm{Im} \lan \hat{\vv}, \vf^{\mrm{b}}\ran=0$ we proceed analogously. \qed
%%%%%%%%%%%%%%%%%%%%%%%%%%%%%%%%%%%%%%

%%%%%%%%%%%%%%%%%%%%%%%%%%%%%%%%%%%%%%%%%%%%%%%
\begin{lemma} \label{first-lemma}
For $C  \in \mfa$ {\bc and $\vf_\lam$ defined below (\ref{comm})} we have $\lim_{\lam \to 0}\| [C, W(\vf_\lam)] \| = 0$.
\end{lemma}
%%%%%%%%%%%%%%%%%%%%%%%%%%%%%%%%%%%%%%%%%%%%%%%%%
\begin{proof}
For any $\eps$ we can find an observable $C_{\eps}$ localized in some double cone $\mco^{(\eps)}$ {\bc such that
$\|C-C_{\eps}\|\leq \eps$.} Thus we can write
\begin{align}
 |[C,    W(\pmb{f}_{\lam}) ]\| \leq &2\|C-C_{\eps}\|\|W(\pmb{f}_{\lam})\|+   \|[C_{\eps},  W(\pmb{f}_{\lam}) ]\|\non\\
 \leq & 2\eps + \|[C_{\eps},  W(\pmb{f}_{\lam}) ]\|.
  \end{align}
Now it suffices to show that for any fixed $\eps$ we have $\lim_{\lam\to 0} \|[C_{\eps},  W(\pmb{f}_{\lam}) ]\|=0$.
For this purpose, we note that $W(\pmb{f}_{\lam})\in \mfa(V_+ + \lam e_+)$, with some $e_+ \in V_+$.  
For sufficiently large $\lambda$, the vector $\lambda e_+$ is in the future of $\mco^{(\eps)}$ and thus $ \|[C_{\eps},  W(\pmb{f}_{\lam}) ]\|=0$ by the Huyghens principle \bc{(\ref{Huyghens}}). This concludes the proof.
\end{proof}

%%%%%%%%%%%%%%%%%%%%%%%%%%%%%%%%%%%%%%%%%%
\begin{lemma} \label{second-lemma}
{\bc Let $\check\vv(\vk):=\fr{1}{|\vk|^{3/2}} F(|\vk|)\boldsymbol{h}(\hvk)$, where $F, \vh$ are as in  (\ref{general-coherent-functions})
and in addition $F(0)$ is non-zero and real.
 %{\bc with $\boldsymbol{h}\neq 0$   $\mrm{Im}\,F(0)\neq 0$}. 
Then,  for any $\vh\neq 0$ there exists $\vf\in \mcL(V_+)$ such that}
\beqa
\lim_{\lam \to 0} \mrm{Im} \lan \check{\vv} , \vf_\lam   \ran \neq 0.  
\eeqa 
\end{lemma}
%%%%%%%%%%%%%%%%%%%%%%%%%%%%%%%%%%%%%%%%%%
\begin{proof}
Proceeding as in computation (\ref{long-computation}) we get
\beqa
& &\mrm{Im} \lan \check\vv, \vf^{\mrm{e}}\ran=- \fr{(2\pi)^2}{2i}  \int d\Om(\hvk) \vh(\hvk)\big(\int_0^{\infty} d\rho \, i\rho  \ov{F}(\rho)   \ti{\vfe}(\rho,\hvk \rho)\non\\
& & \ph{44444444444444444444444444444444}- \int_{-\infty}^{0} d\rho \,  i \rho F(-\rho)   \ti{\vfe}(\rho, \hvk \rho)\big).
\eeqa
Setting $\vf_{\mrm{e},\lam}(t, \vx) := \lam^{2}\vf_{\mrm{e}}(\lam t, \lam \vx)$ {\bc we obtain} $\ti{\vf}_{\mrm{e},\lam}(|\vk|, \vk)=\lam^{-2}\ti\vf_{\mrm{e}}(|\vk|/\lam, \vk/\lam)$
and 
%\beqa
%& &\mrm{Im} \lan \check\vv, \vf^{\mrm{e} }_{\lam}\ran=- \fr{(2\pi)^2}{2i}  \int d\Om(\hvk) \vh(\hvk)\big(\int_0^{\infty} d(\rho/\lam) \, %i(\rho/\lam)  \ov{F}(\rho)   \ti{\vfe}(\rho/\lam,\hvk \rho/\lam )\non\\
%& & \ph{44444444444444444444444444444444}- \int_{-\infty}^{0} d(\rho/\lam) \,  i (\rho/\lam) F(-\rho)   \ti{\vfe}(\rho/\lam , \hvk %\rho/\lam)\big).
%\eeqa
%A change of variables gives
\beqa\label{rhoplus}
& &\mrm{Im} \lan \check\vv, \vf^{\mrm{e}}_{\lam}\ran=- \fr{(2\pi)^2}{2i}  \int d\Om(\hvk) \vh(\hvk)\big(\int_0^{\infty} d\rho \, i\rho  \ov{F}(\lam \rho )   \ti{\vfe}(\rho,\hvk \rho )\non\\
& & \ph{44444444444444444444444444444444}- \int_{-\infty}^{0} d\rho \,  i \rho F(-\lam \rho)   \ti{\vfe}(\rho , \hvk \rho)\big).
\eeqa
{\bc Now we use that {\bc $F(0)$ is real} to get in the limit $\lam \to 0$}
%{\bc the integrals don't seem to add up like this}
\beqa\label{sequence}
& &\lim_{\lam\to 0}\mrm{Im} \lan \check\vv, \vf^{\mrm{e}}_{\lam}\ran=- (2\pi)^2  F(0)\int d\Om(\hvk) \vh(\hvk)\big(\int_{0}^{\infty} d\rho \, \rho\, \mrm{Re}\big(  \ti{\vfe}(\rho,\hvk \rho ) \big) \big). %\non\\
%& & \ph{44444444444444444444444444444444}- \int_{-\infty}^{0} d\rho \,  i \rho   \ti{\vfe}(\rho , \hvk \rho)\big).
\eeqa
{\bc In the remaining part of the proof} we will exhibit $\vfe \in \mcL(V_+)$ such that  (\ref{sequence}) is different from zero
%the limit of  \eqref{rhoplus} as $\lambda \to 0$ does not vanish, 
thus proving our claim with $\ti\vfb =0$.

The integration in $\rho$ {\bc in (\ref{sequence})} can be computed by means of Fourier transforms of distributions \cite[Sec.~3]{GS}, and its real part yields 
\begin{equation}
 \mrm{Re}\big(\int_{0}^{\infty} d\rho \, \rho\,\ti{\vfe}(\rho,\hvk \rho ) \big) =-\fr{1}{(2\pi)^2} \int_{V_+} dt d^3\vx\, \vfe(t,\vx) (t- \hvk\cdot \vx +i0 )^{-2}.
\end{equation}
Note that since $\vfe$ is supported inside the {\bc future} light cone, there is no singularity at $t- \hvk\cdot \vx=0$ in the above expression, and the regularization there can be dropped. Inserting into \eqref{sequence}, we have
\begin{equation}
\lim_{\lam\to 0}\mrm{Im} \lan \check\vv, \vf^{\mrm{e}}_{\lam}\ran= F(0)\int d\Om(\hvk) \vh(\hvk) \int_{V_+} dt d^3\vx\, \vfe(t,\vx) (t- \hvk\cdot \vx )^{-2},
\end{equation}
Now, since the function 
\begin{equation}\label{fct}
{\bc V_+ \ni (t, \vx) }  \mapsto \int d\Om(\hvk) \vh(\hvk)  (t- \hvk\cdot \vx )^{-2}
\end{equation}
is analytic in $(t, \vx) \in V_+$ as the integration region is compact, then it either does not vanish except on a null set, or it vanishes identically. In the first  case, one can find $\vfe$ such that \eqref{sequence} is non-zero, finishing the proof. In the second case, we will construct a contradiction. We expand the function $\lambda \mapsto  (t- \lambda \hvk\cdot \vx )^{-2}$ around $\lambda =0$, and obtain
\begin{equation}
 \int d\Om(\hvk) \vh(\hvk)  (t- \lambda \hvk\cdot \vx )^{-2} = \sum_{\ell =0}^\infty (\ell+1)t^{-\ell-{\bc 2}} \lambda^\ell \int d\Om(\hvk) \vh(\hvk)  (\hvk\cdot \vx )^\ell.
\end{equation}
If the r.h.s. vanishes identically for all $(t, \vx) \in V_+$, it follows that
\begin{equation}
\forall \ell,\vx  \; :\;  \int d\Om(\hvk) \vh(\hvk) (\hvk\cdot \vx )^\ell =0.
\end{equation}
Now for $\vx = \pmb{e}_3$, the unit vector in the direction of the $z$-axis, we have in usual spherical coordinates,
\begin{equation}
\forall \ell,j  \; :\;  \int d\Om(\hvk) h_j (\hvk) \cos^\ell \theta =0,
\end{equation}
thus $h_j$ is orthogonal to all $Y_{\ell 0}$. 

Since the representation of the rotation group is irreducible at every fixed angular momentum $\ell$, we can use our choice of $\vx $ to show orthogonality to all rotated $Y_{\ell 0}$, and therefore to all $Y_{\ell m}$. Thus $ \vh \equiv 0$ {\bc which is a contradiction}.
\end{proof}
%%%%%%%%%%%%%%%%%%%%%%%%%%%%%%%%%%%%%%%%%
%This also concludes the proof of Proposition~\ref{prop-check}.

%%%%%%%%%%%%%%%%%%%%%%%%%%%%%%
\section{Pauli-Fierz model of non-relativistic QED}\label{sec:pauli}
%%%%%%%%%%%%%%%%%%%%%%%%%%%%%%%%
Our aim is to apply {\bc the} results from the previous section in the Pauli-Fierz model of non-relativistic QED. We now summarize some known facts about this model, as used in \cite[Subsection~4.1]{CD18}. 
By analogy with (\ref{em-potential}), 
we define  the quantized electromagnetic vector potential
with infrared and ultraviolet cut-offs $0\leq \si\leq \ka$  
as the following  operator on  $\mcF_0$
\beqa
\vA_{[\si,\ka]}(\vx):=\sum_{\la=\pm} \int \fr{d^3\pmb{k}}{\sqrt{|\vk|}} \chi_{[\si, \ka]}(|\vk|) \pol(\hat\vk)\big( e^{-i\vk\cdot \vx} a^*_{\la}(\vk) +   e^{i\vk\cdot \vx} a_{\la}(\vk)  \big), \label{fiber-Hamiltonians}
\eeqa
where $\chi_{\De}$ denotes the characteristic function of a set $\De$.  The fiber Hamiltonians from the decomposition (\ref{fiber-hamiltonians}) are given by  
\beqa
H_{\vP,\si}=\h(\vP-\vP_{\pho}+ \alt^{1/2}\vA_{[\si,\ka]}(0))^2+H_{\pho},\quad H_{\vP}:=H_{\vP,\si=0}. \label{Hamiltonian-section1}
\eeqa
They are self-adjoint, positive operators on a domain in $\mcF$, which is independent of $\vP$ (see, e.g., \cite{Sp, Hi00, KM14}). 
The infima of the spectra of $H_{\vP,\si}$, $H_{\vP}$, denoted by $E_{\vP,\si}:=\mrm{inf}\,\mrm{ Spec}(H_{\vP,\si})$, $E_{\vP}:=\mrm{inf}\,\mrm{ Spec}(H_{\vP})$ 
 are rotation invariant functions of $\vP$.

Now we recall some spectral results, mostly from \cite{CFP09, FP10},   which will be used in the next section. From now on we discuss
the regime of low coupling $\ti\al >0$ and 
momenta $\vP$ restricted to the ball
\beqa
\mcS=\Big\{ \vP\in \real^3\,|\, |\vP|< \fr{1}{3} \Big\}.
\eeqa 
It is well  known that for any $\si>0$ the operators $H_{\vP,\si}$ have ground-states $\Psi_{\vP,\si}\in \mcF$, $\|\Psi_{\vP,\si}\|=1$, so that
  $E_{\vP,\si}$ are eigenvalues. The dependence $\vP\mapsto E_{\vP,\si}$ is analytic for any fixed $\si>0$
by the Kato perturbation theory.   In the limit $\si\to 0$ the vectors $\Psi_{\vP,\si}$  tend weakly to zero 
\cite{CFP09,  Fr73, Fr74.1, Ch00} and the Hamiltonians $H_{\vP}$ do not have ground-states for $\vP\neq 0$ \cite{HH08}.  
To analyze  this phenomenon, one introduces the auxiliary vectors 
\beqa
\Phi_{\vP,\si}:=W(-i\vv_{\vP, \si})\Psi_{\vP,\si}, \quad W(-i\vv_{\vP,\si})=e^{a^*( \vv_{\vP,\si})-a(\vv_{\vP,\si})   }, \label{modified-states}
\eeqa
where $\vv_{\vP,\si}$ has the form
\beqa
\pmb{v}_{\pmb{P},\sigma}(\pmb{k}) =   \alt^{1/2}  P_{\tr} \frac{\chi_{[\sigma, \kappa]} (|\pmb{k}|) }{|\pmb{k}|^{3/2}}
\frac{\nabla E_{\pmb{P},\sigma}}{1  - \hat{\pmb{k}}\cdot \nabla E_{\pmb{P},\sigma}}, \label{v-P-sigma}
\eeqa
and we set $\hat{\vk}:=\vk/|\vk|$ and $\nabla E_{\pmb{P},\sigma}:=\nabla_{\vP}E_{\pmb{P},\sigma}$. (By a slight abuse of notation, we use in (\ref{modified-states}) the notation $W(\vf)$ also for $\vf$ which are not in the spaces (\ref{symplectic-spaces})). The following lemma collects some facts from \cite{CFP09, FP10}.\footnote{Precisely, for (a) and (b) see \cite[Theorem III.3 and Corollary III.4]{FP10}, for (c) see \cite[Eq.~(III.2) and  formula (V.6)]{CFP09} and for (d) \cite[Theorem III.1]{CFP09}.}
%%%%%%%%%%%%%%%%%%%%%%%%%%%%%%%%%%%%%%%%%%%%
\bel\label{spectral} Let $\ti\al>0$ be sufficiently small and $\vP\in \mcS$. Then
\begin{enumerate}
\item[(a)] The function $\vP \mapsto E_{\vP}$ is rotation invariant, twice differentiable and has a strictly positive second derivative with respect to $|\vP|$. 
\item[(b)] $\lim_{\si\to 0} \pa_{\vP}^{\be} E_{\vP,\si}$ exists and equals $\pa_{\vP}^{\be} E_{\vP}$ for $|\be|\leq 2$.
\item[(c)] $|\nabla E_{\vP,\si}|\leq v_{\mathrm{max}}<1$ and $|\nabla E_{\vP}|\leq v_{\mathrm{max}}<1$ for some constant $v_{\mathrm{max}}$, uniformly in $\sigma$ and in $\vP \in \mcS$.
\item[(d)] $\Phi_{\vP}:=\lim_{\si\to 0}\Phi_{\vP,\si}$ exists in norm for a suitable choice of the phases of $\Psi_{\vP,\si}$. 
\end{enumerate}
\eel
%%%%%%%%%%%%%%%%%%%%%%%%%%%%%%
\nin In the following we assume that the phases of $\Psi_{\vP,\si}$ are fixed as in Lemma~\ref{spectral}~(d). Using Lemma~\ref{spectral}~(b) 
we can define the pointwise limit
\beqa
\vv_{\vP}(\vk):=\lim_{\si\to 0}\vv_{\vP,\si}(\vk)= {\alt^{1/2}  P_{\tr}  \frac{\chi_{[0, \kappa]} (|\pmb{k}|) }{|\pmb{k}|^{3/2}}
\frac{ \nabla E_{\pmb{P}} }{1  - \hat{\pmb{k}}\cdot \nabla E_{\pmb{P}}}.} \label{v-P}
\eeqa
We note that the expressions $1-\hat{\pmb{k}}\cdot \nabla E_{\pmb{P}, \si}$ and  $ 1-\hat{\pmb{k}}\cdot \nabla E_{\pmb{P}}$ 
in the denominators of (\ref{v-P-sigma}) and (\ref{v-P}) are different from zero by Lemma~\ref{spectral}~(c).  Furthermore,
 $P_{\mrm{tr}}$  acting in (\ref{v-P}) on a function which is not in $L^2(\real^3;\complex^3)$ is defined by the right hand side of (\ref{transverse-projection}). 
The fact that   $\vv_{\vP}$ is not in $L^2_{\tr}(\real^3; \complex^3)$ for $0\neq \vP\in \mcS$ will be important below.  

%%%%%%%%%%%%%%%%%%%%%%%%%%%%%%%%%%%%
\section{Curing velocity superselection}\label{sec:curing}
%%%%%%%%%%%%%%%%%%%%%%%%%%%%%%%%%%%%

Now let us consider a special example of the state \eqref{v} which is relevant in the Pauli-Fierz model and related to the problem of velocity superselection. On the CCR algebra $\mfa$ over the symplectic space 
$\mcL$ as introduced above, we define
\beqa
\om_{\pmb{P}}(A):=\lim_{\sigma\to 0}\lan \Psi_{\pmb{P},\sigma}, A  \Psi_{\pmb{P}, \sigma}\ran=\lan \Phi_{\pmb{P}},
{\bc \al_{-i\vv_{\pmb{P} }  } }(A)  \Phi_{\pmb{P}}\ran,
\quad A\in\mfa, \label{state-formula}
\eeqa
{\bc where the automorphism $\al_{-i\vp}$ is defined as in (\ref{exp})}
%acts on Weyl operators by
%\begin{equation}\label{avp}
%\alpha_{\vp}(W(\fb)) =  \lim_{\sigma \to 0} W(-i\vps) W(\fb) W(-i\vps)^\ast = e^{-2i \im  \langle -i\vp, \fb \rangle} W(\fb),
%\end{equation}
{\bc and $\vps$ is given by (\ref{v-P-sigma})}.
These states describe plane-wave configurations of the electron with velocity~$\nabla E_{\vP}$. 
Now let $\pi_{\vP}$ be the GNS representation of $\om_{\vP}$. By formula~(\ref{state-formula}) and standard arguments (see, e.g., \cite[Lemma A.1]{CD18}), we have
\beqa
\pi_{  \pmb{P} }\simeq \pi_{\vac}\circ \al_{  {\bc -i\vv_{\pmb{P}} }  }, \label{representation-equation}
\eeqa
where $\pi_{\vac}$ is the defining Fock vacuum representation and $\simeq$ denotes unitary equivalence. Thus, in particular, $\pi_{  \pmb{P} }$ are irreducible representations.  

The mathematical formulation of velocity superselection, consisting in the disjointness of $\pi_{  \pmb{P} }$ for distinct $\pmb{P}$,
was introduced by Fr\"ohlich in \cite{Fr73}   and established later by various authors in different models and for
varying choices of the algebra $\mfa$  \cite{CFP09, CF07, Fr73,KM14, CD18}. 
{\bc From the argument below it is clear that the details of the construction of $\mfa$ are largely arbitrary.}
%We establish it below in our situation, by showing that the details of the construction of $\mfa$
%are largely arbitrary.
%%%%%%%%%%%%%%%%%%%%%%%%%%%%%%%%%%%%%%%%%%
\bep\label{superselection-proposition} Let  $\vP, \vP'\in \mcS$,  $\vP\neq \vP'$.  Then $\pi_{\vP}$ and $\pi_{\vP'}$ are disjoint.
\eep
%%%%%%%%%%%%%%%%%%%%%%%%%%%%%%%%%%%%%%%%%%%
\begin{proof}  %Suppose by contradiction that $\pi_{\vP}, \pi_{\vP'}$ are unitarily equivalent. Then, by Lemma 1 of \cite{Ro70},
%$|\mrm{Im}\lan \vP-\vP', f\ran|\leq C\|f\|$ for all $f\in \mcL$.
{\bc We adapt Lemma 1 of \cite{Ro70}}. Suppose by contradiction that there is a unitary $U$ such that
\beqa
\pi_{\vac} \circ \al_{ {\bc -i\vv_{\pmb{P}} } }=\mrm{Ad}U\, \circ \pi_{\vac} \circ \al_{ {\bc -i \vv_{\pmb{P}' } }} \Rightarrow 
\pi_{\vac} \circ \al_{ {\bc-i(\vv_{\pmb{P}} - \vv_{\pmb{P}' }) }  }=\mrm{Ad}U\, \circ \pi_{\vac}.  \label{implication}
\eeqa
{\bc Since $\mcL$ is dense and $\vv_{\pmb{P}} - \vv_{\pmb{P}' }$ is not square-integrable, we can find
a sequence $\mcL\ni \vf_n\to 0$ in $L^2$ s.t.
\beqa
\lim_{n\to \infty}\mrm{Im}\lan i(\vv_{\pmb{P}} - \vv_{\pmb{P}' }), \vf_n\ran \neq 0.
\eeqa
By evaluating both sides of the second relation in (\ref{implication}) on $W(\vf_n)$ and 
using that $W(\vf_n)\to I$ in the strong operator topology, we conclude the proof. }
\end{proof}
%%%%%%%%%%%%%%%%%%%%%%%%%%%%%%%%%%%%%%%%%%%%%%%
%the functional $\mcL\ni \vf\mapsto \mrm{Im}\lan \vv_{\pmb{P}} - \vv_{\pmb{P}' }, \vf\ran$
%is discontinuous in the $L^2$-topology, we can find a sequence $\mcL\ni \vf_n\to 0$  
%Now let $\wt\mcL\subset L^2_{ \mrm{tr} }(\real^3;\complex^3)$ be a  symplectic space consisting of functions vanishing
%in some neighbourhood of zero and let $\wt\mfa:=\mrm{CCR}(\wt{\mcL})$. Since $\mfa$ acts irreducibly in the vacuum %representation,
%we can approximate any element $\wt{A}\in \wt\mfa$ with elements from $\mfa$ in the strong operator topology.
%As the right hand side of the second formula in (\ref{implication}) is continuous in this topology, we have
%\beqa
%\pi_{\vac} \circ \al_{ \vv_{\pmb{P}} - \vv_{\pmb{P}' }  }(\wt A)=\mrm{Ad}U\, \circ \pi_{\vac}(\wt A).
%\eeqa
%Setting $\wt A=\al_{ \vv_{\pmb{P}' } }(\wt B)$ for some arbitrary $\wt B\in \wt{\mfa}$, we conclude that
%\beqa
%\pi_{\vac} \circ \al_{ \vv_{\pmb{P}} }(\wt{B})=\mrm{Ad}U\, \circ \pi_{\vac} \circ \al_{ \vv_{\pmb{P}' } }(\wt{B}), \quad \wt{B}\in %\wt{\mfa}.
%\eeqa
%Thus we obtain that velocity superselection does not hold on $\wt{\mfa}$ which is in conflict with Proposition~4.4 of %\cite{CD18}. 
%The aim of this section is to prove the following proposition.
{\bc The main result of this section} is the following theorem,  which says that velocity superselection can be resolved by restriction to the future lightcone.
%%%%%%%%%%%%%%%%%%%%%%%%%%%%%%%%%%
\bet\label{prop:vp} For any $\vP, \vP'\in \mcS$ we have $\pi_{\vP}\restriction \mfa(V_+)\simeq \pi_{\vP'} \restriction \mfa(V_+)$.
\eet
%%%%%%%%%%%%%%%%%%%%%%%%%%%%%%%%%%
\nin {\bc In view of (\ref{representation-equation})}, this is a consequence of Theorem~\ref{general-main-theorem} above.
A different argument, which is applicable only to {\bc representations $\pi_{\boldsymbol{P}}$}, is given in Appendix~\ref{appendix-A}.

%%%%%%%%%%%%%%%%%%%%%%%%%%%%%%%%%%
\section{Conclusions}
%%%%%%%%%%%%%%%%%%%%%%%%%%%%%%%%%%

In this paper we showed that the problem of velocity superselection of the electron can be
resolved by restriction to the algebra of the future lightcone  $V_+$.   
We considered only the lightcone with a tip  at zero, but a generalisation to shifted  lightcones is
straightforward. As expected from the time-reversal symmetry of QED, restriction to a  backward lightcone $V_-+\mathbf{a}$, $\mathbf{a}\in \real^4$, has the same effect. 
We showed that the GNS representations of $\mfa(V_+)$ are unitarily equivalent for a large class of coherent states, of which those in the Pauli-Fierz model are an example.
We are confident that
analogous results hold in other models of non-relativistic QED by suitably adapted arguments. For example in the Nelson model,
which describes the electron interacting with the massless scalar field, already a counterpart of  (\ref{first-time-integral}) would give 
an approximating sequence localised in the backward lightcone, and a double-integral formula (\ref{double-time-integral}) would not be needed.

Proceeding towards future research directions, we recall that there is a more satisfactory concept of velocity superselection
in non-relativistic QED, which uses the algebra  generated by  \cite{CFP07}
\beqa
%f^+:=\lim_{t\to\infty} e^{itH}f(\pmb{x}/t) e^{-itH}, \quad 
W_{\mrm{out}}(\pmb{h}):=e^{i(a^*_{\mrm{out}}(\pmb{h})+a_{\mrm{out}}(\pmb{h})) },  \textrm{ where } a_{\mrm{out}}^*(\pmb{h})=\lim_{t\to\infty} e^{itH}a^*(e^{-it|\pmb{k}|} \pmb{h})  e^{-itH}
\eeqa
and  $h$ are suitable functions.
 The representations induced by the infraparticle scattering states on this
algebra have a direct integral decomposition into disjoint representations labelled by the electron's asymptotic velocity. We conjecture
that also in this context the algebra of the future lightcone can be found, on which these representations are unitarily equivalent. 
Such analysis may pave the way to suitably dressed Hamiltonians of non-relativistic QED, for which the infraparticle problem disappears.
We hope to come back to this problem in a future investigation.

\appendix

%%%%%%%%%%%%%%%%%%%%%%%%%%%%%%%%%
%\section{Lightcone normality for coherent states given by $\boldsymbol{v}_{\boldsymbol{P}}$ } \label{appendix-A}

\section{{\bc An alternative proof of Theorem~\ref{prop:vp}  } } \label{appendix-A}%%%%%%%%%%%%%%%%%%%%%%%%%%%%%%%%%

Theorem~\ref{prop:vp} follows immediately from Lemmas~\ref{lemma1} and \ref{lemma2} below.

%%%%%%%%%%%%%%%%%%%%%%%%%%%%%
%As a first step to prove Theorem~\ref{prop:vp}, we introduce another automorphism of the CCR algebra by
%\begin{equation}\label{autilde}
%\alpha_{{\bc -i\vpt} }(W(\fb)) := e^{-2i \im  \langle -i\vpt, \fb \rangle} W(\fb),
%\end{equation}
%
{\bc First, we  introduce an auxiliary function $\vpt$  given by}
\begin{equation}\label{vpt-equation}
\vpt :=\alt^{1/2} P_{\text{tr}}\frac{ \tilde g(\kb) e^{-iu|\kb|}\nabla E_{\Pb}}{\km^{3/2} (1 - \nabla E_{\Pb} \cdot \hat{\kb})},
\end{equation}
{\bc where} $g:\mathbb{R}^3 \to \mathbb{R}$ is a smooth function with compact support, with $\tilde g(\pmb{0})= 1$, and \textcolor{black}{$u >1$} is so large that $(-u,\mathbf{x})\in V_-$ for all $\mathbf{x}\in\operatorname{supp} g$. 
%Eq.~\eqref{autilde} is well-defined since $\fb  \in \mathcal{L}(\mathcal{O})$. 
Since $\vp-\vpt\in L^2_{\mrm{tr}}(\real^3;\complex^3)$, we have by standard arguments (e.g. Lemma 1 of \cite{Ro70}): 
%%%%%%%%%%%%%%%%%%%%%
\begin{lemma}\label{lemma1}
$\alpha_{{\bc -i\vp}} \circ \alpha_{ {\bc-i \vpt}}^{-1}$ acts by the adjoint action of a unitary  on the $C^\ast$-algebra  ${\bc \mfa}$. 
\end{lemma}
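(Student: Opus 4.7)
The plan is to show that on Weyl generators the composition $\alpha_{\vp}\circ \alpha_{\vpt}^{-1}$ reduces to multiplication by a single phase, namely a ``formal Bogoliubov automorphism'' $\alpha_{\vp-\vpt}$, and that this automorphism is implemented by the Weyl unitary $W(-i(\vp-\vpt))$ provided $\vp-\vpt$ lies in the one-particle Hilbert space $L^2_{\tr}(\real^3;\complex^3)$. Indeed, from \eqref{avp} and \eqref{autilde} one computes directly
\beqa
\alpha_{\vp}\circ \alpha_{\vpt}^{-1}(W(\fb)) = e^{-2i\im\langle -i(\vp-\vpt),\fb\rangle}\,W(\fb),\quad \fb\in\mcL(V_+),
\eeqa
so the problem reduces to exhibiting a unitary whose adjoint action reproduces this phase on every Weyl generator of $\mfa(V_+)$.

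The key analytic point is therefore to verify $\vp-\vpt\in L^2_{\tr}(\real^3;\complex^3)$. Subtracting \eqref{v-P} and \eqref{vpt-equation} one has
\beqa
(\vp-\vpt)(\vk)=\alt^{1/2}\, P_{\tr} \frac{\nabla E_{\vP}}{|\vk|^{3/2}\bigl(1-\hat\vk\cdot \nabla E_{\vP}\bigr)}\, \bigl[\chi_{[0,\kappa]}(|\vk|) - \tilde g(\vk)\,e^{-iu|\vk|}\bigr].
\eeqa
The denominator is bounded below uniformly in $\vk$ and in $\vP\in\mcS$ by Lemma~\ref{spectral}~(c). Since $g$ has compact support, $\tilde g$ is smooth with $\tilde g(\pmb{0})=1$, so a first-order Taylor estimate gives $\tilde g(\vk)\,e^{-iu|\vk|}=1+O(|\vk|)$ as $\vk\to 0$; this cancels exactly against $\chi_{[0,\kappa]}(|\vk|)=1$ on a neighbourhood of the origin. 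Hence the bracket is $O(|\vk|)$ and $\vp-\vpt$ is $O(|\vk|^{-1/2})$ there, which is square-integrable in three dimensions. At large $|\vk|$ the sharp cut-off in $\vp$ and the Schwartz decay of $\tilde g$ in $\vpt$ render both terms rapidly decreasing, and transversality follows trivially from the projection $P_{\tr}$ appearing in both.

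With $\vp-\vpt\in L^2_{\tr}$ established, the Weyl operator $U:=W(-i(\vp-\vpt))$ is a well-defined unitary on $\mcF$. For any Weyl generator $W(\fb)$ of $\mfa(V_+)$ the Weyl relations yield
\beqa
U\,W(\fb)\,U^{*} = e^{-2i\sigma(-i(\vp-\vpt),\fb)}\,W(\fb) = e^{-2i\im\langle -i(\vp-\vpt),\fb\rangle}\,W(\fb),
\eeqa
which coincides with $\alpha_{\vp}\circ\alpha_{\vpt}^{-1}(W(\fb))$. Since such Weyl operators generate $\mfa(V_+)$ as a $C^{*}$-algebra and both the automorphism and the adjoint action of $U$ are norm-continuous, the identity extends by continuity to all of $\mfa(V_+)$. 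The principal obstacle in this plan is the $L^2$-integrability at $\vk=0$; this is no accident but precisely the motivation for the definition of $\vpt$, whose normalisation $\tilde g(\pmb 0)=1$ is engineered so as to cancel the infrared singularity of $\vp$. (The spacelike localisation condition on $g$ plays no role in Lemma~\ref{lemma1} itself; it is reserved for the use of Huyghens' principle in Lemma~\ref{lemma2}.)
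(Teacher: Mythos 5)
Your proposal is correct and follows essentially the same route as the paper: compute the composed automorphism on Weyl generators to get the phase $e^{-2i\im\langle -i(\vp-\vpt),\fb\rangle}$, verify $\vp-\vpt\in L^2_{\tr}(\real^3;\complex^3)$ (the paper justifies this only by the remark that $\tilde g(\pmb{0})=1$, whereas you spell out the $O(|\vk|)$ cancellation at the origin), and implement the phase by $\mrm{Ad}\,W(-i(\vp-\vpt))$ extended by density to the generated $C^*$-algebra.
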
 
%%%%%%%%%%%%%%%%%%%%%%%%%%%
%\begin{proof}
%By density arguments, it suffices to check  on the generators of $\mfa(\mco)$ for any fixed double cone $\mco\subset V_{+}$ %that there is an
%intertwining unitary independent of $\mco$. 
 % For any $\fb  \in \mathcal{L}(\mathcal{O})$, we have:
%\begin{multline}\label{unit}
%\big( \alpha_{\vp} \circ \alpha_{\vpt}^{-1} \big) (W(\fb)) =  e^{-2i \im  \langle i\vpt, \fb \rangle}   e^{-2i \im  \langle -i\vp, \fb \rangle} %W(\fb) 
%= e^{-2i \im  \langle -i (\vp - \vpt), \fb \rangle} W(\fb) \\
%= W( -i (\vp - \vpt))W(\fb)  W(-i (\vp - \vpt))^\ast,
%\end{multline}
%
%where the last equality is justified if $\vp - \vpt$ is square integrable, i.e., if
%\begin{equation}
%\sum_{\lambda=\pm} \int d^3\kb \Big\lvert  \frac{\alt^{1/2} \nabla E_{\Pb} \cdot \pmb{\epsilon}_\lambda(\hat{\kb})}{\km^{3/2} (1 - %\nabla E_{\Pb}\cdot \hat{\kb})}\Big(\tilde g(\kb)e^{-iu|\kb|} - \chi_{[0,\kappa]}(\km) \Big)  \Big\rvert^2 < \infty,
%\end{equation}
%
%which holds since $ \tilde g(\pmb{0}) =1$.
%\end{proof}
%\textcolor{black}{It is clear from the proof that the statement of Lemma~\ref{lemma1} is actually valid on the quasi-local algebra %$\mfa$.}
%In order to show Theorem~\ref{prop:vp},  
{\bc \nin Now it suffices to prove the following:}
%that $\alpha_{\vpt}$ acts like the identity on 
%$\mathfrak{A}(V_+)$. This is the content of the following lemma: }
  %%%%%%%%%%%%%%%%%%%%%%%%%%%%
\begin{lemma}\label{lemma2}
The automorphism $\alpha_{{\bc -i\vpt}}$ acts like the identity on $\mathfrak{A}(V_+)$.
\end{lemma}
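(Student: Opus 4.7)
The plan is to make rigorous the heuristic displayed just before the lemma.
Since $\alpha_{\vpt}$ is a $*$-automorphism of $\mfa$ and $\mfa(V_+)$ is the $C^*$-algebra generated by the Weyl operators $W(\fb)$ with $\fb\in\mcL(V_+)$, it suffices to show $\alpha_{\vpt}(W(\fb))=W(\fb)$ for every such $\fb$. By formula~\eqref{autilde} and the closure of $\mcL(V_+)$ under real rescaling of $\fb$, this in turn reduces to proving
\begin{equation*}
  \im\langle -i\vpt,\fb\rangle \;=\; 0 \qquad \text{for every } \fb\in\mcL(V_+).
\end{equation*}
The strategy is to produce, for each $T>u$, an approximant $\vf_T\in\mcL(V_-)$ whose symplectic pairing with $\fb$ is first \emph{zero} by the Huyghens principle and second, in the limit $T\to\infty$, equal to $\im\langle -i\vpt,\fb\rangle$.

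To construct $\vf_T$, I would truncate both time integrals in the heuristic at $T$, yielding the self-adjoint operator
\begin{equation*}
  Y_T \;:=\; -\frac{\alt^{1/2}}{(2\pi)^{3/2}} \int_0^T dt \int_t^T d\tau\; \nabla E_{\vP}\cdot \mathbf{E}(g)(-\tau-u,\, -\nabla E_{\vP}\, t).
\end{equation*}
An interchange of integrations expresses $Y_T = \mathbf{E}(\vfe^{(T)})$ for the real function
\begin{equation*}
  \vfe^{(T)}(t',\vx) \;=\; -\frac{\alt^{1/2}}{(2\pi)^{3/2}} \,\nabla E_{\vP}\, \mathbf{1}_{[-T-u,-u]}(t') \int_0^{-t'-u} dt\, g(\vx+\nabla E_{\vP}\, t).
\end{equation*}
Smoothing the characteristic function in $t'$ by convolution with a sharply peaked bump places $\vfe^{(T)}$ in $D(V_-;\real^3)$: a point $(t',\vx)$ in its support satisfies $|\vx|\leq R+|\nabla E_{\vP}|(-t'-u)$ with $R:=\sup\{|\by|:\by\in\supp g\}$ and $-t'\geq u$, so the defining inequality $|\vx|<-t'$ of $V_-$ reduces to $R<u$, which is exactly the hypothesis imposed on $u$. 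Via~\eqref{fourier-representation}, $\vfe^{(T)}$ determines $\vf_T\in\mcL(V_-)$ with $W(\vf_T)=e^{iY_T}\in\mfa(V_-)$. The Huyghens principle~\eqref{Huyghens} forces $W(\vf_T)$ and $W(\fb)$ to commute; combining this with the Weyl relations and the closure of $\mcL(V_-)$ under real rescaling gives $\sigma(\vf_T,\fb)=0$ for every $T$ and every $\fb\in\mcL(V_+)$.

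The lemma is thereby reduced to the convergence $\sigma(\vf_T,\fb)\to\sigma(-i\vpt,\fb)$ as $T\to\infty$, which I expect to be the main technical obstacle. A direct Fourier computation yields
\begin{equation*}
  \vf_T(k) \;=\; -i\vpt(k) \;+\; r_T(k),
\end{equation*}
where the leading term reproduces~\eqref{vpt-equation} exactly (including the factor $e^{-iu|k|}$) and $r_T$ collects boundary contributions of the truncated $(t,\tau)$-integrals, carrying oscillatory phases of the form $e^{\pm i|k|(1\pm\hat{\vk}\cdot \nabla E_{\vP})T}$. The pairing $\langle -i\vpt,\fb\rangle$ is absolutely convergent: the infrared singularity $|\vpt(k)|\sim|k|^{-3/2}$ is tamed by the vanishing $\fb(k)=O(|k|^{1/2})$ at the origin, which follows from transversality together with the smoothness of $\vfe,\vfb$ via~\eqref{fourier-representation}, while the Schwartz decay of $\tilde g$ controls the ultraviolet. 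The residual $\langle r_T,\fb\rangle$ vanishes as $T\to\infty$ by a Riemann--Lebesgue estimate whose decisive input is the bound $|\nabla E_{\vP}|\leq v_{\mathrm{max}}<1$ from Lemma~\ref{spectral}(c), which keeps $|1\pm\hat{\vk}\cdot \nabla E_{\vP}|$ bounded away from zero so that the phases are genuinely non-stationary. Combining this convergence with $\sigma(\vf_T,\fb)=0$ yields $\sigma(-i\vpt,\fb)=0$, completing the proof.
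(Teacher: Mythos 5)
Your truncation--Huyghens--limit strategy coincides with the paper's, and the localization of the approximant in $\mcL(V_-)$ (modulo the mollification step you acknowledge) is essentially correct. The gap is in the final convergence claim $\sigma(\vf_T,\fb)\to\sigma(-i\vpt,\fb)$, which you attribute to Riemann--Lebesgue together with $|1\pm\hat\kb\cdot\nabla E_{\Pb}|$ being bounded away from zero. When you actually carry out the $\tau$- and $t$-integrations, one of the boundary terms in $r_T$ comes with the factor
\begin{equation*}
\frac{e^{i\nabla E_{\Pb}\cdot\kb\, T}-1}{\nabla E_{\Pb}\cdot\hat\kb}\,,
\end{equation*}
whose coefficient is \emph{not} dominated, uniformly in $T$, by an integrable function: near the equator $\nabla E_{\Pb}\cdot\hat\kb=0$ it is of order $T\km$, while the naive bound $2/|\nabla E_{\Pb}\cdot\hat\kb|$ diverges logarithmically when integrated in $d\Omega(\hat\kb)$. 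So the observation that the phases are non-stationary is not sufficient here, and dominated convergence plus Riemann--Lebesgue does not close this term as stated. The paper resolves it by the identity $\frac{e^{ix}-1}{x}=i\int_0^1 d\beta\, e^{i\beta x}$, which eliminates the singular denominator and isolates an explicit factor of $T$, followed by two integrations by parts in $\km$ that convert that $T$ into $1/T$; these integrations by parts in turn require $\km^{3/2}f_\lambda(\kb)$ and its $\km$-derivative to vanish at $\km=0$, which does hold by \eqref{fourier-representation} as you noticed. The other boundary term, with phase $e^{-i(\km-\nabla E_{\Pb}\cdot\kb)T}$ and a coefficient whose denominator $1-\nabla E_{\Pb}\cdot\hat\kb$ is uniformly bounded below by Lemma~\ref{spectral}(c), is the one for which your Riemann--Lebesgue argument works exactly as described.

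A secondary remark: you construct the approximant in position space and then mollify $\mathbf{1}_{[-T-u,-u]}(t')$ to put $\vfe^{(T)}$ into $D(V_-;\real^3)$, which works but introduces a second limit to track. The paper instead defines $\vptT$ directly in momentum space and invokes the equivalence of the Buchholz--Jacobi description of $\mcL(V_-)$ with the one from \eqref{symplectic-spaces} (Lemmas~\ref{lemma-symplectic-equivalence} and~\ref{Lemma-in-appendix-x}), which sidesteps the mollification bookkeeping; either route establishes $W(-i\vptT)\in\mfa(V_-)$.
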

%%%%%%%%%%%%%%%%%%%%%%%%%%%%%
\begin{proof}
We only need to show that $\alpha_{{\bc -i\vpt}}(W(\fb))=W(\fb)$ for all $\fb \in \mathcal{L}(\mathcal{O})$ and $\mco\subset V_+$. As remarked {\bc in the Introduction}, this is achieved by approximating $\vpt$ with functions localized in the (standard) backward light cone, and using timelike commutativity of the free electromagnetic field. Hence we define the approximant, $T>0$,
\begin{equation}\label{cutoff}
(\vptT{})_{\lambda}(\kb) :=  -\alt^{1/2} \int_0^T dt  \int_t^{T} d\tau\; \sqrt{\km} \nabla E_{\Pb} \cdot   \pmb{\epsilon}_\lambda(\hat\kb) \tilde g(\kb)  e^{-i \km u}  e^{-i(\km \tau - \nabla E_{\Pb} \cdot \kb t)}.
\end{equation}
This suggests an approximating sequence for $W(-i\hat{\pmb{v}}_{\pmb{P}  })$,
\begin{align}%\label{Wvpt}
W(-i\hat{\pmb{v}}_{\pmb{P},T  }) &= 
 \exp\Big( - \alt^{1/2} \sum_{ \lambda=\pm }\int_0^T dt\;  \int_t^{T} d\tau \, \int d^3\kb \non\\  & \, (\sqrt{\km} \nabla E_{\Pb} \cdot   \pmb{\epsilon}_\lambda(\hat\kb) \tilde g(\kb)  e^{-i \km u}  e^{-i(\km \tau - \nabla E_{\Pb} \cdot \kb t)} a_\lambda^{ *} (\kb) )  -\text{h.c.}\Big) \non \\
 &= \exp\Big( -i \alt^{1/2} \int_0^T dt\;  \int_t^{T} d\tau \,  \fr{1}{(2\pi)^{3/2}} \nabla E_{\Pb} \cdot \pmb{E}(g) \big( -u-\tau,   -\nabla E_{\Pb}t  \big) \Big), 
 \label{lightcone-line}
\end{align}
considering that with our conventions,
\begin{align}
\pmb{E}(t,\pmb{x})=-\sum_{  \lambda=\pm} \int d^3\pmb{k}\, \sqrt{|\pmb{k}| } \pmb{\epsilon}_{\lambda}(\hat{\pmb{k}}) i 
\big(  e^{i|\pmb{k}|t-i\pmb{k}\cdot \pmb{x} } 
a_{\lambda}^{*}(\pmb{k})-   e^{-i |\pmb{k}| t +i\pmb{k}\cdot \pmb{x} }   a_{\lambda}(\pmb{k})\big).
\end{align}

The region of integration in (\ref{lightcone-line}) is depicted in Figure~\ref{fig:region}. 
As remarked above, $u$ is chosen so large that $-i\hat{\pmb{v}}_{\pmb{P},T  }$ is contained in $\mathcal{L}(V_-)$ and thus  $W(-i\hat{\pmb{v}}_{\pmb{P},T  })\in \mfa(V_-)$ (see Lemma~\ref{Lemma-in-appendix-x}). Therefore $\im \langle -i\vptT, \fb \rangle =0$ if  $\fb \in \mathcal{L}(\mathcal{O}) \subset \mathcal{L}(V_+)$, see \eqref{Huyghens}.
\begin{figure}
\begin{center}
 \includegraphics[width=0.95\textwidth]{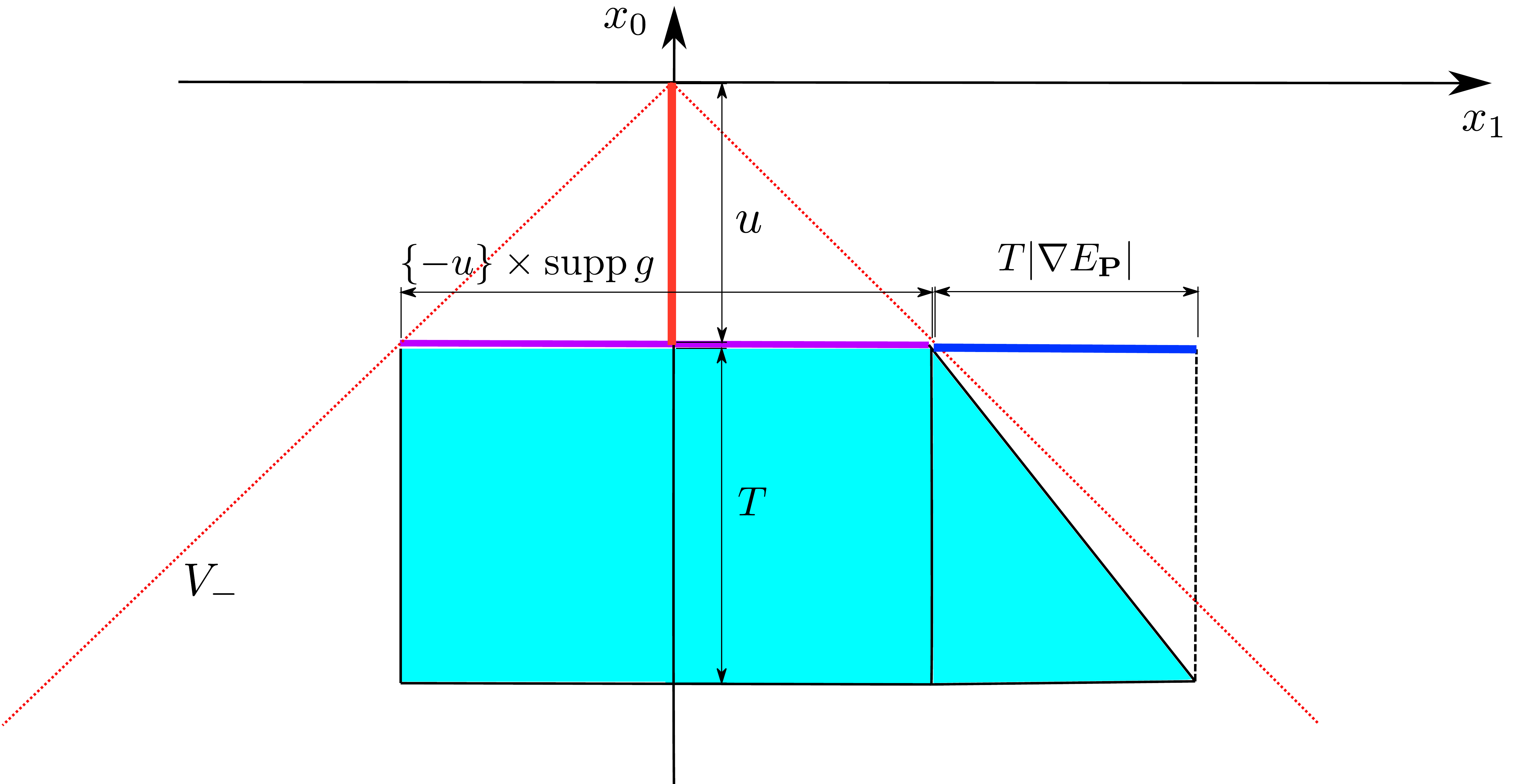}
 \end{center}
 \caption{Integration region  in Eq.~(\ref{lightcone-line}).} \label{fig:region}
\end{figure}
It now suffices to check that $\lim_{T\to\infty}\langle \vptT, \fb \rangle = \langle \vpt, \fb \rangle$ for all $\fb\in \mathcal{L}(\mathcal{O})$.  Then
 $\im\langle  -i\vpt, \fb \rangle=0$ and hence $\alpha_{{\bc -i\vpt}}(W(\fb))=W(\fb)$ by its definition \eqref{exp}.

To that end, we first perform the $\tau$- and $t$-integrations in \eqref{cutoff}, which give
\begin{equation}\label{3terms}
\begin{aligned}
(\vptT{})_\lambda(\kb) &=  (\vpt)_{\la}(\kb)\\ 
&- \alt^{1/2} \nabla E_{\Pb} \cdot   \pmb{\epsilon}_\lambda(\hat{\kb}) \tilde g(\kb)  e^{-i \km u}  e^{-i\km T}\frac{1}{\km^{3/2}}\frac{1}{ \nabla E_{\Pb} \cdot \hat{\kb} }\Big\lbrack   e^{i \nabla E_{\Pb} \cdot \kb T} -1 \Big\rbrack \\
&-\alt^{1/2} \nabla E_{\Pb} \cdot   \pmb{\epsilon}_\lambda(\hat{\kb}) \tilde g(\kb)  \frac{1}{\km^{3/2}(1-\nabla E_{\Pb} \cdot \hat{\kb} )}   e^{-i \km u}e^{-i( \km - \nabla E_{\Pb} \cdot \kb) T} .
\end{aligned}
\end{equation}
We need to show that the last two  terms in \eqref{3terms} vanish weakly in the limit $T \to \infty$. The last of these terms gives a contribution to 
$\ov{ \langle \vptT, \fb \rangle}$ of
\begin{equation}\label{term2}
-\sum_{ \lambda=\pm }\int d^3\kb\, \alt^{1/2}  \nabla E_{\Pb} \cdot   \pmb{\epsilon}_\lambda(\hat{\kb}) \tilde g(\kb) \frac{ 1}{\km^{3/2}(1-\nabla E_{\Pb} \cdot \hat{\kb})} e^{-i \km u} e^{-i( \km - \nabla E_{\Pb} \cdot \kb) T} \overline{f_\lambda(\kb)}.
\end{equation}
This vanishes in the limit $T\to \infty$ due to the dominated convergence for the angular integration in $d\Omega(\hat{\kb})$, and by applying the Riemann-Lebesgue lemma to the one-dimensional integration in $d\km$ with oscillating factor $e^{-i\km(1 - \nabla E_{\Pb}\cdot \hat{\kb})T}$. For the relevant majorants, note that $\tilde g$ is Schwartz and that the integrand behaves like $\km^{-3/2}$ at small $\kb$, which remains integrable with respect to $d^3\kb$.

The second term of  \eqref{3terms}  gives the contribution 
\begin{multline}
 -\sum_{ \lambda=\pm}  \int d^3\kb\, \alt^{1/2}  \nabla E_{\Pb} \cdot   \pmb{\epsilon}_\lambda(\hat{\kb}) \tilde g(\kb)  e^{-i \km u}  e^{-i\km T}\frac{1}{\km^{3/2}}\frac{1}{\nabla E_{\Pb} \cdot \hat{\kb} }\Big\lbrack   e^{i \nabla E_{\Pb} \cdot \kb T} -1 \Big\rbrack  \ov{f_\lambda(\kb) } \\
=  -i\alt^{1/2}T\sum_{ \lambda=\pm } \int_0^1 d\beta\, \int d\Omega(\hat{\kb}) \int_0^\infty d\km\,     \nabla E_{\Pb} \cdot   \pmb{\epsilon}_\lambda(\hat{\kb}) \tilde g(\kb)  e^{-i \km u} \km^{3/2}  \ov{f_\lambda(\kb)} e^{-i\km T(1- \beta \hat{\kb}\cdot \nabla E_{\Pb})}.
\end{multline}
Integrating by parts twice in $\km$ we obtain: 
\begin{equation}\label{termi3}
\frac{i \textcolor{black}{\alt^{1/2}}}{T} \sum_{ \lambda=\pm }\int_0^1 d\beta\, \int d\Omega(\hat{\kb}) \int_0^\infty d\km\, \frac{ 
\frac{\partial^2}{\partial \km^2}\Big\lbrack \nabla E_{\Pb} \cdot   \pmb{\epsilon}_\lambda(\hat{\kb}) \tilde g(\kb)  e^{-i \km u} \km^{3/2} \ov{f_\lambda(\kb)}  \Big\rbrack 
}{
(1- \beta \hat{\kb}\cdot \nabla E_{\Pb})^2
}
e^{-i\km T(1- \beta \hat{\kb}\cdot \nabla E_{\Pb})}\end{equation}
up to boundary terms which vanish for any fixed $\hat{\kb}$ since $\km^{3/2}f_\lambda (\kb)$ vanishes as $\km \to 0$ together with its derivative with respect to $\km$
 (cf.~Eq.~\eqref{fourier-representation}),  and  since $\pmb{\epsilon}_\lambda(\hat{\kb})$ are chosen independent of $\km$.
We estimate the above integral as follows:
\begin{equation}\label{fin}
\lvert \eqref{termi3} \rvert \leq \frac{ \textcolor{black}{\textcolor{black}{2}\alt^{1/2} u^2}\lvert \nabla E_{\Pb} \rvert}{Tc^2} \sum_{ \lambda=\pm } \int d\Omega(\hat{\kb})  \int_0^\infty d\km\,  \sum_{\ell=0,1,2} \Big\lvert \frac{\partial^{\ell}}{\partial \km^{\ell}}\Big\lbrack  \km^{3/2}  \tilde g(\kb)   \ov{f_\lambda(\kb)}  \Big\rbrack   \Big\rvert
\end{equation}
using that $1- \beta \hat{\kb}\cdot \nabla E_{\Pb} \geq 1- \lvert \beta  \rvert\lvert \hat{\kb} \rvert \lvert \nabla E_{\Pb} \rvert =: c$. 
Taking into account that $\fb\in \mathcal{L}(\mathcal{O})$ and that $\tilde g$ is Schwartz, one finds that the second derivative is integrable in $\km$ with a bound for the integral uniform in $\hat\kb$.
Hence the  integrals are all finite, and \eqref{fin} vanishes in the limit $T \to \infty$. 
\end{proof}

%%%%%%%%%%%%%%%%%%%%%%%%%%%%%%%%%%%%%%%%%%%%%%%

%%%%%%%%%%%%%%%%%%%%%%%%%%%%%%%%
\section{Equivalence of two definitions of the symplectic space}
%%%%%%%%%%%%%%%%%%%%%%%%%%%%%%%%%

Lemma~\ref{Lemma-in-appendix-x} from this appendix is used in the proof of Lemma~\ref{lemma2} above.

Let $O_r\subset \real^3$ be an open ball of radius $r$ centered at zero and let
$J$ be the complex conjugation in configuration space. Following \cite{BuJa} we define the
symplectic space
\beqa
& &\mcL_{\mrm{BJ}}:=\bigcup_{r>0}  \mcL_{\mrm{BJ}}(O_r), \quad \textrm{where}\\
& &\mcL_{\mrm{BJ}}(O_r):= (1+J)\ov{  |\vk|^{-1/2} (i \pmb{k} \times  \wt{D}(O_r; \real^3))  }+(1-J)\ov{ |\vk|^{1/2} P_{\mrm{tr}}  \wt{D}(O_r; \real^3)  }.
\eeqa
We recall that the spaces $\mcL(\mco)$ and the symplectic space $\mcL$ were defined in  (\ref{symplectic-spaces}) and note the following
lemma. (A similar discussion of the scalar field can be found in \cite[Section 7.4.1]{Bo00}).
%%%%%%%%%%%%%%%%%%%%%%%%
\bel\label{lemma-symplectic-equivalence} For any $r>0$ we have $\mcL_{\mrm{BJ}}(O_r)=\mcL(\mco_r)$
where $\mco_r$ is the double cone centered at zero whose base is $O_r$. Hence, $\mcL_{\mrm{BJ}}=\mcL$.
\eel
%%%%%%%%%%%%%%%%%%%%%%%%%
\proof In order to show $\mcL(\mco_r)   \subset  \mcL_{\mrm{BJ}}(O_r)$, we decompose  
$\vf$ given in (\ref{fourier-representation}) into its real and imaginary part in configuration space
\beqa
\vf=\fr{(1+J)}{2} \vf +\fr{(1-J)}{2} \vf. \label{J-decomposition}
\eeqa
Next, exploiting that $P_{\text{tr}} \vf=  -\frac{1}{\km^2} \kb \times \lbrack \kb \times  \vf \rbrack$, we obtain
\begin{align}
\fr{(1+J)}{2}\vf(\pmb{k})&=(-i) (2\pi)^2 \km^{-1/2} \kb \times  \Big\lbrack \kb \times \Big( - \frac{\ti\vfe(|\vk|,\vk) - \overline{\ti\vfe(|\vk|,-\vk)}}{2 \km}  \Big) \nonumber \\ 
&\quad\quad\quad\quad\quad\quad\ph{444444444444444444444444}  +\frac{\ti\vfb(|\vk|,\vk) + \overline{\ti\vfb(|\vk|,-\vk)}}{2} \Big\rbrack.
\end{align}
It is easy to see that
\begin{align}
 \frac{\ti\vfe(|\vk|,\vk) - \overline{\ti\vfe(|\vk|,-\vk)}}{2 \km}  &= \frac{i}{(2\pi)^2} \int dt d^3\vx\, \vfe(t,\vx) e^{-i \kb \cdot \vx} \frac{\sin (\km t)}{\km}, \label{sine-one}\\
\frac{\ti\vfb(|\vk|,\vk) + \overline{\ti\vfb(|\vk|,-\vk)}}{2}  &= \frac{1}{(2\pi)^2} \int dt d^3\vx\, \vfb(t,\vx) e^{-i \kb \cdot \vx} \cos (\km t). \label{cosine-one}
\end{align}
The rapid decay of (\ref{sine-one}) and (\ref{cosine-one}) as $|\pmb{k}|\to \infty$  implies smoothness of their inverse Fourier transforms. 
By choosing the polar coordinates, we compute  the inverse Fourier transform of \eqref{cosine-one}:
\begin{align}
&\frac{1}{(2\pi)^2} \int dt d^3\vx\, \vf_{\mrm{b}}(t,\vx) \int d^3 \kb\, e^{-i \kb \cdot (\vx - \pmb{y})} \cos (\km t) \label{inverse-fourier-transform}\\
&\ph{4444444444444444444444}= 4\pi \frac{1}{(2\pi)^2} \int dt d^3\vx\, \frac{\vf_{  \mrm{b}}  (t,\vx)}{\lvert \vx - \pmb{y} \rvert} \int_{ 0}^{\infty} d\km\, \km  \cos (\km t) \sin (\km \lvert \vx - \pmb{y}\rvert) \nonumber\\
&\ph{4444444444444444444444} =  \frac{1}{\pi} \int dt d^3\vx\, \frac{\vf_{ \mrm{b}}(t,\vx)}{\lvert \vx - \pmb{y} \rvert} \delta'(t - \lvert  \vx - \pmb{y}\rvert).\nonumber
\end{align}
By this formula, $\supp \vf_{\mrm{b} } \subset \mco_r$ implies that the expression in (\ref{inverse-fourier-transform}) is supported in $O_r$ in the $\pmb{y}$
variable. An analogous argument applies to (\ref{sine-one}). Also, 
the analysis of the second term on the right hand side of (\ref{J-decomposition}) follows the same steps.

To justify $\mcL(\mco_r)\supset  \mcL_{\mrm{BJ}}(O_r)$, we choose an arbitrary $f_{i}\in D(O_r;\real)$ and consider a smooth solution of the wave
equation of the form
\beqa
g_i(t,\vx)=\fr{1}{(2\pi)^{3/2}}\int d^3\vk\, e^{i\vk\cdot \vx} \fr{\sin(|\vk|t)  }{|\vk|}   \ti f_i(\vk),  
\eeqa
which is compactly supported in space for any fixed $t$ and satisfies $g_i(0,\vx)=0$, $(\pa_t g_i)(0,\vx)=f_i(\vx)$. Thus we can 
write
\begin{align} \label{smearing-symplectic-form}
\int d^3\vx\, E_i(0, \vx)f_i(\vx)&=\int d^3\vx\, \big( E_i(0, \vx) (\pa_t g_i)(0,\vx)-(\pa_t  E_i)(0, \vx) g_i(0,\vx) \big)\non\\
&= \int d^3\vx\,  E_i(t, \vx) \overset{\leftrightarrow}{\pa}_t g_i(t,\vx)=\int d\tau \,\al(\tau)\int d^3\vx\,  E_i(\tau, \vx) \overset{\leftrightarrow}{\pa}_{\tau} g_i(\tau,\vx),
\end{align}
where in the last step we made use of the time-invariance of the symplectic form on the space of solutions of the wave equation to integrate
with $\al \in D(\real;\real)$ such that $\int d\tau \, \al(\tau)=1$, whose support is chosen in a sufficiently small neighbourhood of zero. Considering 
that an analogous equality holds for the components of the magnetic
field and acting with  both sides of (\ref{smearing-symplectic-form}) on the vacuum, we conclude from the finite propagation speed of $g_i$ that   $\mcL(\mco_r) \supset  \mcL_{\mrm{BJ}}(O_r)$. \qed\\
%%%%%%%%%%%%%%%%%%%%%%%%%%%%%%%%%%%%%
As an application of Lemma~\ref{lemma-symplectic-equivalence}, we show that the expression in (\ref{lightcone-line}) is an element 
of the $C^*$-algebra $\mfa(V_-)$ (and not only of its weak closure).
%%%%%%%%%%%%%%%%%%%%%%%%%%%%%%%%%
\bel\label{Lemma-in-appendix-x} In the notation from the proof of Lemma~\ref{lemma2}, we have 
\beqa
W(-i\vptT)\!=\!\exp\Big(\fr{ \!-\! i \alt^{1/2}}{(2\pi)^{3/2}} \int_0^T dt  \int_t^{T} d\tau\; \nabla E_{\Pb} \cdot \pmb{E}(g) \big( -u-\tau,  { -}\nabla E_{\Pb}t  \big) \Big)
\! \in\! \mfa(V_-).\,\,\,
\eeqa
\eel
%%%%%%%%%%%%%%%%%%%%%%%%%%%%%%%%%%%%
\proof We note the equality
\beqa
& &e^{i|k|(u+T)}(-i\vptT{})= \int_0^T dt  \int_t^{T} d\tau\;  \pmb{v}_\mathrm{int}(\tau,t),  \label{shifted-integral}\\
& &   \pmb{v}_\mathrm{int}(\tau,t):=  i \alt^{1/2} \sqrt{\km} P_{\text{tr}}\nabla E_{\Pb} \tilde g(\kb)e^{i \km T}  e^{-i(\km \tau - \nabla E_{\Pb} \cdot \kb t)}.
\eeqa
We recall that $u> 1$ is chosen so large that $\supp\, g\subset O_{u}$. 
Following the steps from the proof of Lemma~\ref{lemma-symplectic-equivalence}, one can show that the integral on the right hand side of 
(\ref{shifted-integral}) belongs to $\mcL_{\mrm{BJ}}(O_{u+T})$. Considering this,
by Lemma~\ref{lemma-symplectic-equivalence} it belongs to $\mcL(\mco_{u+T})$, where $\mco_{u+T}$ is the double
cone whose base is $O_{u+T}$.  Then, by equality (\ref{shifted-integral}), $-i\vptT{}\in \mcL(V_-)$. \qed

%%%%%%%%%%%%% Bibliography %%%%%%%%%%%%%

\end{document}